%% Modified from bare_conf.tex on 06-12-04
%% V1.2
%% 2002/11/18
%% by Michael Shell
%% mshell@ece.gatech.edu
%%
%
% \documentclass[10pt,journal,final,letterpaper,twocolumn,twoside]{IEEEtran}
% \documentclass[letterpaper,10pt,conference]{IEEEtran}
% If the IEEEtran.cls has not been installed into the LaTeX system files,
% manually specify the path to it:
\documentclass[conference]{IEEEtran}
%%%%% one column version
% \documentclass[10pt,draftcls,onecolumn]{IEEEtran}
% \renewcommand{\baselinestretch}{1.3}

% final check: there should be no "Y", no "A" and "B" in this document

%%%%% two column version
% \documentclass[10pt,journal,final,letterpaper,twocolumn,oneside]{IEEEtran}

\setlength{\unitlength}{1cm}

\usepackage{times,latexsym,amsfonts,amsmath,amssymb,mathrsfs,verbatim,cite}
\usepackage{epsfig,epsf}
\usepackage{graphicx}
\usepackage[dvips]{color}
\usepackage{txfonts}

\bibliographystyle{IEEEtran}

\def\nN{{\mathbb N}}
\def\zZ{{\mathbb Z}}

\def\eE{{\mathbb E}}
\def\pP{{\mathbb P}}

\def\QED{\mbox{\rule[0pt]{1.5ex}{1.5ex}}}

\def\@begintheorem#1#2{\tmpitemindent\itemindent\topsep 0pt\rm\trivlist
    \item[\hskip \labelsep{\indent\it #1\ #2:}]\itemindent\tmpitemindent}
\def\@opargbegintheorem#1#2#3{\tmpitemindent\itemindent\topsep 0pt\rm \trivlist
    \item[\hskip\labelsep{\indent\it #1\ #2\
    \rm(#3):}]\itemindent\tmpitemindent}
\def\@endtheorem{\endtrivlist\unskip}

\newtheorem{theorem}{Theorem}
\newtheorem{definition}{Definition}
\newtheorem{fact}{Fact}

\newtheorem{lemma}{Lemma}

\newtheorem{remark}{Remark}
\renewcommand{\theequation}{\arabic{section}.\arabic{equation}}

\setcounter{page}{1}

\begin{document}

% paper title
\title{Efficient File Synchronization: a Distributed Source Coding Approach$^{\text{\small 1}}$}

% author names and affiliations
% use a multiple column layout for up to three different
% affiliations

\author{\IEEEauthorblockN{Nan Ma, Kannan Ramchandran and David Tse}
\IEEEauthorblockA{Wireless Foundations, Dept. of Electrical Engineering and Computer Sciences\\
University of California at Berkeley} }

% make the title area
\maketitle
%%%%%%%%%%%%%%%%%%%%%%%%%%%%%%%%%%%%%%%%%%%%%%%%%%%%%%%%%%%%%%%%%%%%%%%%%%%%%%%%
\begin{abstract}
The problem of reconstructing a source sequence with the presence of
decoder side-information that is mis-synchronized to the source due
to deletions is studied in a distributed source coding framework.
Motivated by practical applications, the deletion process is assumed
to be bursty and is modeled by a Markov chain. The minimum rate
needed to reconstruct the source sequence with high probability is
characterized in terms of an information theoretic expression, which
is interpreted as the amount of information of the deleted content
and the locations of deletions, subtracting ``nature's secret'',
that is, the uncertainty of the locations given the source and
side-information. For small bursty deletion probability, the
asymptotic expansion of the minimum rate is computed.
\end{abstract}
%\vspace{-0.1in}
\section{Introduction}
\addtocounter{footnote}{+1} \footnotetext{This material is based
upon work supported by the US National Science Foundation (NSF)
under grants 23287 and 30149 and by a gift from Qualcomm Inc.. Any
opinions, findings, and conclusions or recommendations expressed in
this material are those of the authors and do not necessarily
reflect the views of the NSF.}

% Motivations

In distributed file backup or file sharing systems, different source
nodes may have different versions of the same file differing by a
small number of edits including deletions and insertions. The edits
usually appear in bursts, for example, a paragraph of text is
deleted, or several consecutive frames of video are inserted. An
important question is: how to efficiently send a file to a remote
node that has a different version of it? Further, what is the
fundamental limit of the number of bits that needs to be sent to
achieve this goal?

% Introducing the problem
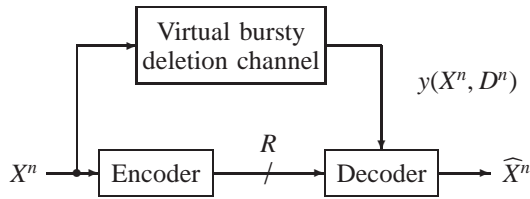
\begin{figure}[!htb]
\centering
\begin{picture}(9,2.5) % \begin{picture}(0,3.5) for two col
\put(1,0){
    % boxes
    \put(1,0){\framebox(1.5,0.6){Encoder}}
    \put(4,0){\framebox(1.5,0.6){Decoder}}
    \put(1.5,1.5){\framebox(2.5,1){$\mbox{Virtual bursty} \atop \mbox{deletion channel}$}}
    % arrows
    \put(0.3,0.3){\vector(1,0){0.7}}
    \put(0.7,0.3){\circle*{.1}}
    \put(0.7,0.3){\line(0,1){1.7}}
    \put(0.7,2){\vector(1,0){0.8}}
    \put(4,2){\line(1,0){0.75}}
    \put(4.75,2){\vector(0,-1){1.4}}
    \put(5.5,0.3){\vector(1,0){0.7}}
    \put(2.5,0.3){\vector(1,0){1.5}}
%    % labels
    \put(-0,0.3){\makebox(0,0){$X^n$}}
    \put(6.55,0.35){\makebox(0,0){$\widehat {X}^n$}}
    \put(5.90,1.5){\makebox(0,0){$y(X^n,D^n)$}}
    \put(3.25,0.3){\makebox(0,0){/}}
    \put(3.25,0.7){\makebox(0,0){$R$}}
    }
\end{picture}
\caption{\label{fig:system} \small \sl Synchronizing source
sequences based on deletion side-information}
\end{figure}

In this paper, we study the problem of reconstructing a source
sequence with the help of decoder side-information using a
distributed source coding framework (see Figure~\ref{fig:system} for
an illustration of the system). In this paper we focus on a simple
case where the side-information is a deleted version of the source
sequence. Consider a binary sequence of length $n$ denoted by $X^n =
(X_1,\ldots,X_n)$. Consider another binary sequence of length $n$
called deletion pattern, denoted by $D^n = (D_1,\ldots,D_n)$, which
determines how $X^n$ is to be deleted. The outcome of the deletion
process, denoted by $y(X^n, D^n)$, is derived from $X^n$ by deleting
the bits at those locations where the deletion pattern is $1$. Here
is an example:
\begin{eqnarray*}
X^n &=& (0, 1, 0, 1, 1, 0, 1, 0, 1, 0) \\
D^n &=& (0, 1, 1, 0, 0, 0, 1, 1, 1, 0) \\
y(X^n,D^n)&=& (0, 1, 1, 0, 0).
\end{eqnarray*}
Note that the deletion pattern $D^n$ tends to have bursts of
consecutive $1$'s, which lead to bursty deletions. The original
files $X^n$ and the deleted files $y(X^n, D^n)$ are available to the
encoder and the decoder, respectively. The encoder sends a message
to the decoder, so that the latter can reconstruct (synchronize) the
original files $X^n$ with an error probability that is vanishing
when $n$ goes to infinity. The objective of this work is to
characterize the minimum rate of the message defined as the minimum
number of bits per source bit.

%Since the deletion process introduces more ambiguity in longer
%sequences, the hardest problem is to characterize the limit of the
%minimum rate as $n$ tends to infinity.

%The objective of this work is to characterize the minimum rate of
%the message defined as the minimum number of bits per source
%sequence per source bit.

The problem of synchronizing edited sequences has been studied by
\cite{Leven, OrlitskyDeletion} under the assumptions (1) the decoder
is not allowed to make any error, and (2) the number of edits is a
constant that does not increase with the length of the sequence.
Upper and lower bounds of the minimum number of communication bits
were provided as functions of the number of edits and the length of
the sequence. In \cite{venkataramanan-interactive}, an interactive,
low-complexity and asymptotically optimal scheme was proposed. In
comparison, in this paper, we consider on information theoretic
formulation allowing a positive probability of error that vanishes
as $n$ increases. This assumption allows us to use additional
techniques like random binning to improve the minimum rate. Unlike
in assumption (2), we consider the case that a vanishing fraction of
source bits, rather than a constant number of bits, is deleted, to
get which makes the problem harder and more realistic.

In this paper, we characterize the minimum rate in terms of the
limit of the conditional entropy of the source sequence given the
side-information. We interpret the minimum rate as the amount of
information in the deleted content and the locations of the
deletions, subtracting the uncertainty of the locations given the
source and side-information. We refer to the latter as ``nature's
secret''. This is the information that the decoder will never find
out even if it knows the source sequence and the side-information
exactly; it represents the over-counting of information in the
locations of the deletions. For example, if $X^n=(0,0)$ and
$y(X^n,D^n)=(0)$, the decoder will never know and never needs to
know whether the first bit or the second bit is deleted. Therefore
the information about the precise location of the deleted bit is
over-counted and should be subtracted. For small deletion rate and
geometrically distributed burst length, the minimum rate is computed
up to the precision of two leading terms.

If the deletion pattern $D^n$ is independent and identically
distributed (iid), $X^n$ and $y(X^n,D^n)$ are the input and output
of a binary iid deletion channel (see \cite{MitzSurvey} and
references therein). In this case, the problem of characterizing the
minimum rate to reconstruct iid uniform source sequences in the
distributed source coding problem is closely related to the
evaluation of the mutual information across the deletion channel
with iid uniform input distribution. For small deletion probability,
the second and third order terms\footnote{For small deletion
probability $d$, the first order term of the channel capacity is 1,
the second order term is $\Theta(d \log d)$, and the third order
term is $\Theta(d)$.} of the channel capacity are achieved by iid
uniform input distribution and are computed in
\cite[Lemma~III.1]{MontanariISIT10}. In this paper we consider the
asymptotic expansion of the minimum rate for the general bursty
deletion process where the deletions are correlated over time. In
the special case of iid deletion process, the expansion in
Theorem~\ref{thm:smallbeta} reduces to
\cite[Lemma~III.1]{MontanariISIT10}. Note that in the source coding
problem, the constant term becomes zero, which means that the second
and third order terms of the channel capacity correspond to the
first and second order terms of the minimum rate. Therefore,
although it is mathematically equivalent to evaluate the these terms
for the source coding and channel coding problems, from the
practical point of view, the evaluation is more important for the
source coding problem than for the channel coding problem. See
Remark~\ref{rem:Montanari} for detailed discussions.

When we generalize the iid deletion process to bursty deletion
process, new techniques are introduced. The most interesting
technique is the generalization of the usual concept of a ``run''.
We view the sequence $(1,0,1,0,1,0)$ as a run with respect to
deletion bursts of length two, because deleting two consecutive bits
from that sequence always results in the same outcome sequence
$(1,0,1,0)$.

The rest of this paper is organized as follows. In
Section~\ref{sec:problem} we formally setup the problem and provide
a preview of the main result. In Section~\ref{sec:general} we
provide information theoretic expressions of the minimum rate for
general parameters of the deletion pattern. In
Section~\ref{sec:smallbeta} we focus on the asymptotics when the
deletion rate is small and compute the two leading terms of the
minimum rate. All the proofs are provided in the appendices.

{\it Notation:} With the exception of the symbols $R, E, C$, and
$J$, random quantities are denoted in upper case and their specific
instantiations in lower case. For $i, j\in \zZ$, $V_i^j$ denotes the
sequence $(V_i, \ldots, V_j)$ and $V^i$ denotes $V_1^i$. The binary
entropy function is denoted by $h_2(\cdot)$. All logarithms are base
2. The notation $\{0,1\}^n$ denotes the $n$-fold Cartesian product
of $\{0,1\}$, and $\{0,1\}^*$ denotes $\left(\bigcup_{k\in \zZ^+}
\{0,1\}^k\right) \bigcup \{\emptyset\}$.

\section{Problem Formulation and Main Result}
\label{sec:problem}

%Let us first formally describe one instance of the source sequence
%$X^n$ and one instance of the deletion pattern process $D_0^{n+1}$,
%and then use them to define $r$ source sequences and $r$ deletion
%pattern processes.

\subsection{Problem formulation}
The source sequence $X^n = (X_1,\ldots,X_n)\in \{0,1\}^n$ is iid
Bernoulli$(1/2)$. Let $\alpha, \beta \in (0,1)$. The deletion
pattern $(D_0, D_1, \ldots,D_{n+1})$ is a two-state stationary
Markov chain illustrated in Figure~\ref{fig:markov} with the initial
distribution $p_{D_0} \sim$ Bernoulli$(d)$, where
$d:=\beta/(\alpha+\beta)$ and transition probabilities
$\pP(D_i=0|D_{i-1}=1)=1-\pP(D_i=1|D_{i-1}=1)=\alpha$ and
$\pP(D_i=1|D_{i-1}=0)=1-\pP(D_i=0|D_{i-1}=0)=\beta$, for all
$i=1,2,\ldots,n+1$. Note that the initial distribution $p_{D_0}$ is
the stationary distribution of the Markov chain. The deleted
sequence $y(X^n, D^n)\in \{0,1\}^*$ is a subsequence of $X^n$, which
is derived from $X^n$ by deleting all those $X_i$'s with $D_i=1$
\footnote{$D_0$ and $D_{n+1}$ do not determine the deletion of any
source bit and do not play a role in the problem formulation.
However, they are used in the information theoretic expressions in
Sections~\ref{sec:general} and \ref{sec:smallbeta}.}. The length of
$y(X^n, D^n)$, denoted by $L_y$, is a random variable taking values
in $\{0,1,\ldots,n\}$. For $i < L_y$, $Y_i$ denotes the $i$-th bit
in the $y(X^n, D^n)$ sequence. A run of consecutive $1$'s in the
deletion pattern is called a burst of deletion. Since $\beta$ is the
probability to initiate a burst of deletion, it is called the
deletion rate.

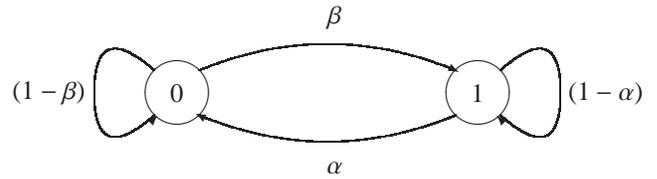
\begin{figure}[!htb]
\centering
\begin{picture}(9,2.5) % \begin{picture}(0,3.5) for two col
\put(1.3,1){
    % circles
    \put(1,0){\circle{0.8}}
    \put(1,0){\makebox(0,0){$0$}}
    \put(5,0){\circle{0.8}}
    \put(5,0){\makebox(0,0){$1$}}
    % arrows
    \put(0,0){\qbezier(1.3,0.3)(3,1)(4.7,0.3)}
    \put(4.4,0.45){\vector(2,-1){0.35}}
    \put(0,0){\qbezier(1.3,-0.3)(3,-1)(4.7,-0.3)}
    \put(1.6,-0.45){\vector(-2,1){0.35}}
    \put(0,0){\qbezier(0.7,0.3)(-0.1,1)(-0.1,0)}
    \put(0,0){\qbezier(0.7,-0.3)(-0.1,-1)(-0.1,0)}
    \put(0.4,-0.65){\vector(1,1){0.35}}
    \put(0,0){\qbezier(5.3,0.3)(6.1,1)(6.1,0)}
    \put(0,0){\qbezier(5.3,-0.3)(6.1,-1)(6.1,0)}
    \put(5.6,-0.65){\vector(-1,1){0.35}}
%    % labels
    \put(3.1,1){\makebox(0,0){$\beta$}}
    \put(3.1,-1){\makebox(0,0){$\alpha$}}
    \put(-0.7,0){\makebox(0,0){$(1-\beta)$}}
    \put(6.7,0){\makebox(0,0){$(1-\alpha)$}}
    }
\end{picture}
\caption{\label{fig:markov} \small \sl Markov model for the deletion
pattern process $\{D_i\}_{i\geq 0}$. $D_i=1$ means $X_i$ is deleted;
$D_i=0$ means $X_i$ is not deleted.}
\end{figure}

%Let the collection of $r$ source sequences $\{X^n(i)\}_{i=1}^r$ be
%independent across $i$, and identically distributed according to
%$p_{X^n(i)} = p_{X^n}$. Similarly, let the collection of $r$
%deletion patterns $\{D^{n}(i)\}_{i=1}^r$ be independent across $i$,
%and identically distributed according to $p_{D^{n}(i)} = p_{D^{n}}$.
%In other words, $\{X^n(i), D^{n}(i)\}_{i=1}^r$ are $r$ iid samples
%of $(X^n, D^{n})$.

The source sequence $X^n$ is available to the encoder and the
deleted sequence $y(X^n, D^n)$ is available only to the decoder as
side-information.
%\footnote{Although the boundary values of the deletion patterns
%$\{D_0(i),D_{n+1}(i)\}_{i=1}^r$ appear to be irrelevant to the
%problem, it is necessary to include them in problem formulation in
%order to show the existence of $R_\infty$ in
%Lemma~\ref{lem:Rit}.}.
The deletion patterns $D^n$ is available to neither the encoder nor
the decoder. The encoder encodes $X^n$ and sends a message to the
decoder so that the decoder can reproduce the source with high
probability.

\begin{remark} If $\beta = 1-\alpha=d$, $D^n$ becomes
iid, and the relation between $X^n$ and $y(X^n, D^n)$ can be modeled
as an iid deletion channel with deletion probability $d$. In this
paper we consider the Markov deletion pattern to emphasize the
bursty nature of the deletion process in the source coding problem.
\end{remark}

The formal definitions of a code and an achievable rate are as
follows.

\begin{definition}\label{def:code}
A distributed source code for deletion side-information with
parameters $(n,|{\mathcal M_n}|)$ is the tuple $(f_n,g_n)$
consisting of an encoding function $f_n: \{0,1\}^{n} \rightarrow
\mathcal M_n$ and a decoding function $g_n: \mathcal M_n \times
\{0,1\}^* \rightarrow \{0,1\}^{n}$.
\end{definition}

\begin{definition}\label{def:achievablerate}
A real number $R$ is called an achievable rate if, there exists a
sequence of distributed source codes $\{(f_n,g_n)\}_{n\geq 1}$ for
deletion side-information with parameters $(n,|{\mathcal M_n}|)$
satisfying $\lim_{n\rightarrow \infty} \pP(X^n \neq
g_n(f_n(X^n),y(X^n,D^n))) = 0$ and $\limsup_{n\rightarrow \infty}
(1/n)\log |{\mathcal M_n}|\leq R$.
%$\forall \epsilon
%> 0$, $\exists~ n_0(\epsilon)$ such that $\forall n>
%n_0(\epsilon)$, there exists a distributed source code $(f,g)$ for
%deletion side-information with parameters $(n,|{\mathcal M}|)$
%satisfying $(1/n)\log |{\mathcal M}| \leq R +\epsilon$ and $\pP(X^n
%\neq g(f(X^n),y(X^n,D^n)))\leq \epsilon$.
\end{definition}

The set of all achievable rates is necessarily closed and hence the
minimum exists. The minimum achievable rate is denoted by $R_{min}$.
The focus of this paper is to characterize $R_{min}$, especially for
small $\beta$.

%\begin{remark}
%The problem of synchronizing a source sequence based on bursty
%deletion side-information can also be formulated with only $r=1$
%source sequence as follows.
%\begin{definition}\label{def:single}
%A real number $R$ is an achievable rate if, $\forall \epsilon
%> 0$, $\exists~ n_0(\epsilon)$ such that $\forall n> n_0(\epsilon)$,
%there exists a distributed source code $(f,g)$ for deletion
%side-information with parameters $(n,1,|{\mathcal M}|)$ satisfying
%$(1/n)\log |{\mathcal M}| \leq R +\epsilon$ and $\pP(X^n \neq
%g(f(X^n),y(X^n,D^n)))\leq \epsilon$.
%\end{definition}
%
%Let the minimum achievable rate under this definition be
%$R_{single}$. Using a genie-type argument, it can be shown that
%$R_{op}\leq R_{single}$. It is not clear if the equality holds or
%not. We conjecture that the equality holds in general.
%\end{remark}

\subsection{Main result}
In Section~\ref{sec:general} we express $R_{min}$ using information
theoretic quantities when the parameters $\alpha$ and $\beta$ take
arbitrary values. Unfortunately, we cannot provide an explicit
expression of $R_{min}$ as a function of $\alpha$ and $\beta$. Hence
we focus on asymptotic regimes in Section~\ref{sec:smallbeta} when
$\beta$ is small.

Since the main difference between the erasure process and the
deletion process is that the locations of the erasures are explicit
but those of the deletions are not, it is interesting to focus on a
regime where the amount of information to describe the locations of
the deletions should play a significant role in the minimum rate.
When $\alpha$ is vanishing and the length of bursts of deletions is
increasing, for each burst, the number of bits to describe the
deleted content increases linearly with respect to the length of the
burst, but the number of bits to describe the location and length of
the burst increases logarithmly. Therefore the regime with a
vanishing $\alpha$ is not interesting. On the contrary, when
$\alpha$ is fixed, the length of a burst is of order $\Theta(1)$ and
we have an interesting regime. In this case, we evaluate
$R_{min}(\alpha,\beta)$ as follows.

\begin{theorem}\label{thm:smallbeta} When $\alpha$ is fixed, for any $\epsilon>0$, we
have
\begin{equation}
R_{min}(\alpha, \beta) = -\beta \log \beta + \beta
\left(\frac{1+h_2(\alpha)}{\alpha} + \log e - C \right)+
O(\beta^{2-\epsilon}),\label{eqn:smallbeta}
\end{equation}
where $C = \sum_{l=1}^{\infty}2^{-l-1}l \log l\approx 1.29$.
\end{theorem}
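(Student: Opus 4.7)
The plan is to start from the information-theoretic characterization $R_{min} = \lim_{n\to\infty}\frac{1}{n}H(X^n|Y^n)$ obtained in Section~\ref{sec:general}, and expand this quantity in $\beta$. The natural tool is the identity
\begin{equation*}
H(X^n|Y^n) = H(X^n|D^n,Y^n) + H(D^n|Y^n) - H(D^n|X^n,Y^n),
\end{equation*}
which splits $R_{min}$ into three rates: that of the deleted content, that of the deletion pattern given the output, and minus ``nature's secret.''

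The first two rates are easy to compute in closed form. Since the deleted bits are iid Bernoulli$(1/2)$ independent of $(D^n,Y^n)$, we have $\frac{1}{n}H(X^n|D^n,Y^n) = d = \beta/(\alpha+\beta) = \beta/\alpha + O(\beta^2)$ exactly. For $H(D^n|Y^n)$, a direct calculation using the fact that $Y^n\mid D^n$ is iid Bernoulli$(1/2)$ of length $L_y = n-\sum D_i$ shows that $I(D^n;Y^n) = H(L_y)$; since $L_y$ is concentrated around $nd$ with $O(\sqrt n)$ fluctuations by Markov-chain concentration, $H(L_y) = O(\log n)$ and therefore $\frac{1}{n}H(D^n|Y^n) \to H(D) = \frac{\alpha}{\alpha+\beta}h_2(\beta) + \frac{\beta}{\alpha+\beta}h_2(\alpha)$. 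Expanding via $h_2(\beta) = -\beta\log\beta + \beta\log e + O(\beta^2)$ yields $-\beta\log\beta + \beta\log e + \beta h_2(\alpha)/\alpha + O(\beta^{2-\epsilon})$.

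The substantive step is to show that $\lim_n \frac{1}{n}H(D^n|X^n,Y^n) = C\beta + O(\beta^{2-\epsilon})$, for which the paper's generalized-run construction is essential. For small $\beta$, Markov-chain concentration gives that, with probability $1 - O(\beta^{2-\epsilon})$, all bursts of $D^n$ are mutually well-separated by a distance far exceeding the largest burst length (itself $O(\log(1/\beta))$), so per-burst ambiguities decouple. For an isolated burst of length $l$ at position $i$, a deletion pattern consistent with $(X^n,Y^n)$ corresponds to shifting the burst within the maximal window around $i$ on which $X$ has period $l$: shifting left by $s$ is admissible iff $X_{i-j}=X_{i-j+l}$ for $j=1,\dots,s$, and analogously on the right. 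Crucially, each additional shift imposes exactly one new equality among previously-unconstrained iid Bernoulli$(1/2)$ bits, so the left and right extensions are independent Geom$(1/2)$ variables and the ambiguity $A$ satisfies $P(A=a) = a\cdot 2^{-a-1}$ for $a \ge 1$, \emph{independent of} $l$. The per-burst entropy contribution is therefore $E[\log A] = \sum_{a\ge 1} a\cdot 2^{-a-1}\log a = C$; multiplying by the burst rate $(1-d)\beta = \beta + O(\beta^2)$ gives $C\beta + O(\beta^{2-\epsilon})$.

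Combining the three contributions yields the expansion \eqref{eqn:smallbeta}. The hardest step is the rigorous $O(\beta^{2-\epsilon})$ error control in the nature's-secret computation: one must (i) bound the probability of atypical Markov trajectories (clustered or abnormally long bursts) using Markov-chain concentration, (ii) control the contribution of those trajectories to $H(D^n|X^n,Y^n)$ by a crude worst-case per-burst ambiguity bound, and (iii) on the typical event, verify that the period-$l$ windows surrounding distinct bursts involve disjoint source bits so the per-burst ambiguities are genuinely conditionally independent and their expectations decompose additively.
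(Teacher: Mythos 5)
Your overall route differs from the paper's: instead of the block identity $H(X^n|Y)=H(X^n|D^n,Y)+H(D^n|Y)-H(D^n|X^n,Y)$, the paper first proves the single-letter reduction of Lemma~\ref{lem:breakdown}, $R_{min}=d+H(D_1|D_0)-E_\infty$ with $E_\infty=\lim_n H(D_1|D_0,X^n,y(X^n,D^n),D_{n+1})$, and then only ever analyzes the uncertainty of the single bit $D_1$. Your handling of the first two terms is correct and clean ($\frac{1}{n}H(X^n|D^n,Y)=d$ exactly, and $I(D^n;Y)=H(L_y)\le\log(n+1)$ so $\frac{1}{n}H(D^n|Y)$ tends to the Markov entropy rate), and your two-sided window argument, with left and right extensions independent Geom$(1/2)$ so that $P(A=a)=a2^{-a-1}$ and the per-burst entropy is $\sum_a a2^{-a-1}\log a=C$, is the same combinatorial mechanism as the paper's $b$-runs (Definition~\ref{def:brun}, Fact~\ref{fact:deletionbrun}) and yields the same constant.

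The genuine gap is in the error control for $\frac{1}{n}H(D^n|X^n,Y)$, which you yourself identify as the hardest step but whose stated plan does not work. The typicality event you invoke --- \emph{all} bursts of $D^n$ mutually well-separated by much more than the largest burst length, the latter $O(\log(1/\beta))$ --- is a global event whose probability tends to $0$ as $n\to\infty$ for fixed $\beta$: the expected number of violating burst pairs is $\Theta(n\beta^2\log(1/\beta))$ and the longest burst grows like $\log n$. Since $R_{min}$ is defined by letting $n\to\infty$ \emph{before} $\beta\to 0$, steps (i)--(ii) cannot furnish a $1-O(\beta^{2-\epsilon})$ event, and a crude worst-case bound on the complement is useless when the complement has probability near $1$. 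The argument must be localized, and that is precisely what your plan omits: either prove the reduction to a single-letter quantity as in Lemma~\ref{lem:breakdown} (whose proof --- expanding $I(D_1;y(X^n,D^n)|X^n,D_0,D_{n+1})$ two ways, the sequence $J_n$, and the monotonicity of $E_n$ --- is itself nontrivial), after which typicality is needed only for the first $O(\log(1/\beta))$ bits of the deletion pattern as in Definition~\ref{def:typicality}; or run a per-position chain-rule argument in which the alignment between segments of $X^n$ and $y(X^n,D^n)$ is established locally, including ruling out spurious segment matches (the analogue of the paper's $M_2$ and $B=0$ cases), with per-bit error $O(\beta^{2-\epsilon})$. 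Without one of these mechanisms the remainder term in (\ref{eqn:smallbeta}) is not justified, even though the leading constants your heuristic produces are correct.
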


The proof of Theorem~\ref{thm:smallbeta} based on
Lemmas~\ref{lem:Rit} and \ref{lem:breakdown}, and is provided in
Appendix~\ref{app:proofsmallbeta}. Detailed discussions about the
proof techniques are given in
Section~\ref{subsec:fixedalphasmallbeta}.

\begin{remark}
The dominating term on the right side of (\ref{eqn:smallbeta}) is
$-\beta \log \beta$, and the second leading term is of order
$\Theta(\beta)$. Since $-\log \beta$ tends to infinity slowly as
$\beta$ decreases to zero, in practice these two terms are often in
the same order of magnitude. Therefore we need to evaluate both of
them.
\end{remark}

\begin{remark}\label{rem:Montanari}
In \cite{MontanariISIT10}, the authors evaluated the mutual
information across the iid deletion channel with iid
Bernoulli$(1/2)$ input as
\[\lim_{n\rightarrow \infty}\frac{1}{n}I(X^n;y(X^n,D^n))=1+d \log d - d (\log 2e -C)+O(d^{2-\epsilon}),\]
which implies that
\[\lim_{n\rightarrow \infty}\frac{1}{n}H(y(X^n,D^n)|X^n)=-d \log d + d (\log 2e -C)+O(d^{2-\epsilon}).\]
This expression should be compared with (\ref{eqn:smallbeta}) in the
special case that the deletion process is iid, which requires $\beta
= 1 - \alpha = d$. Under this condition, (\ref{eqn:smallbeta}) also
has the same two leading terms $-d \log d + d (\log 2e -C)$.
Therefore in the special case of iid deletion process,
(\ref{eqn:smallbeta}) is consistent with the result in
\cite{MontanariISIT10}.
\end{remark}

\begin{remark}\label{rem:channel}
Theorem~\ref{thm:smallbeta} implies that when the input distribution
is iid Bernoulli$(1/2)$, the mutual information across the bursty
deletion channel is
%\begin{eqnarray*}
%\lefteqn{\lim_{n\rightarrow
%\infty}\frac{1}{n}I(X^n;y(X^n,D^n)|D_0,D_{n+1})}\\
%&=&1+d \log d - d (\log 2e -C)+O(d^{2-\epsilon}),
%\end{eqnarray*}
\begin{eqnarray}\label{eqn:channel}
\lefteqn{\lim_{n\rightarrow \infty}\frac{1}{n}I(X^n;y(X^n,D^n))}\nonumber\\
\!\!\!&=&\!\!\! 1+\beta \log \beta - \beta
\left(\frac{1+h_2(\alpha)}{\alpha} + \log e - C \right)+
O(\beta^{2-\epsilon}).
\end{eqnarray}
In \cite{Dobrushin}, Dobrushin showed that the channel capacity of
the iid deletion channel is $\lim_{n\rightarrow \infty} (1/n) \max
_{p_{X^n}} I(X^n; y(X^n,D^n))$. If this expression can be extended
to the bursty deletion channel where the deletion pattern process is
a Markov chain, then (\ref{eqn:channel}) provides an asymptotic
lower bound for the capacity of the bursty deletion channel for
small values of $\beta$.
\end{remark}

\section{Information Theoretic Expression for General $\alpha$ and $\beta$}
\label{sec:general} We can write the minimum achievable rate
$R_{min}$ as the following information theoretic expression.
\begin{lemma}\label{lem:Rit}
\[R_{min}=\lim_{n\rightarrow
\infty}\frac{1}{n}H(X^n|y(X^n,D^n),D_0,D_{n+1}).\]
\end{lemma}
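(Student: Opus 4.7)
This is a Slepian-Wolf-type statement, and I plan to prove it by the classical two-part strategy: a Fano-based converse and a random-binning achievability. For the converse, fix any sequence of codes $(f_n,g_n)$ with $P_e^{(n)} \to 0$, let $M_n = f_n(X^n)$, and run the standard chain
\begin{align*}
\log|\mathcal M_n| &\geq H(M_n) \geq I(X^n; M_n \mid y(X^n,D^n), D_0, D_{n+1}) \\
&= H(X^n \mid y(X^n,D^n), D_0, D_{n+1}) - H(X^n \mid M_n, y(X^n,D^n), D_0, D_{n+1}).
\end{align*}
Fano's inequality applied to $\hat X^n = g_n(M_n, y(X^n,D^n))$ controls the last term by $1 + n P_e^{(n)}$, and dividing by $n$ yields the converse direction.

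For achievability, I first observe that since $H(D_0,D_{n+1}) \leq 2$, the quantities $(1/n) H(X^n \mid y(X^n,D^n))$ and $(1/n) H(X^n \mid y(X^n,D^n), D_0, D_{n+1})$ differ by at most $2/n$, so it suffices to achieve any rate $R > H_\infty := \lim_n (1/n) H(X^n \mid y(X^n,D^n))$. I would apply Cover's random binning: independently assign each $x^n \in \{0,1\}^n$ a uniform bin index in $\{1,\ldots,\lceil 2^{nR}\rceil\}$ and have the decoder output the unique $\hat x^n$ in the received bin satisfying $p(\hat x^n \mid y(X^n,D^n)) \geq 2^{-n(H_\infty+\delta)}$. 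A union bound then bounds the ``collision'' contribution to the error by $2^{-n(R - H_\infty - \delta)} \to 0$, while the remaining contribution is the conditional AEP claim below.

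The main obstacle is that conditional AEP: one needs $-(1/n)\log p(X^n \mid y(X^n,D^n)) \to H_\infty$ in probability so that the true $X^n$ is itself conditionally typical. This is nonstandard because the side information lives in the variable-length set $\{0,1\}^*$ rather than in a fixed-alphabet product space. I would get around this by working with the joint process $(X_i, D_i)_{i \geq 0}$, which is stationary and ergodic on the finite alphabet $\{0,1\}^2$ (ergodicity follows from iid $X^n$ combined with the irreducible, aperiodic, stationary Markov $D^n$), applying Shannon-McMillan-Breiman to $-(1/n)\log p(X^n, D^n)$ and to $-(1/n)\log p(y(X^n,D^n), D_0, D_{n+1})$ separately, and using the decomposition
\[H(X^n \mid y(X^n,D^n), D_0, D_{n+1}) = H(X^n, D^n \mid y(X^n,D^n), D_0, D_{n+1}) - H(D^n \mid X^n, y(X^n,D^n), D_0, D_{n+1}).\]
The conditioning on $D_0$ and $D_{n+1}$ pins down the boundary states of the Markov chain so that $(D_0,D^n,D_{n+1})$ is a clean stationary block, and the residual ``nature's secret'' term $H(D^n \mid X^n, y(X^n,D^n), D_0, D_{n+1})$---already flagged in the introduction---concentrates on its mean and delivers the required convergence.
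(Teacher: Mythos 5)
Your overall architecture (Fano converse, random binning, and a conditional AEP as the crux) is sound and is in fact morally the same as the paper's route, which cites the information-spectrum Slepian--Wolf theorem and then proves convergence in probability of $(1/n)\log\bigl(1/p(X^n|y(X^n,D^n))\bigr)$. The converse and binning parts of your proposal are fine. The genuine gap is exactly where you locate the main obstacle: your justification of the conditional AEP does not go through. You propose to apply Shannon--McMillan--Breiman ``to $-(1/n)\log p(X^n,D^n)$ and to $-(1/n)\log p(y(X^n,D^n),D_0,D_{n+1})$ separately.'' The first is trivial (iid $X$ plus Markov $D$), but SMB simply does not apply to the second: $y(X^n,D^n)$ is not the first $n$ coordinates of any stationary process --- it is a variable-length, length-changing functional of $(X^n,D^n)$, and the family of distributions $\{p_{y(X^n,D^n)}\}_n$ is not a consistent family of marginals of a single process, so no standard ergodic theorem gives concentration of $-(1/n)\log p(y(X^n,D^n),D_0,D_{n+1})$. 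The same objection applies to your claim that the ``nature's secret'' density $-\log p(D^n\,|\,X^n,y(X^n,D^n),D_0,D_{n+1})$ ``concentrates on its mean'': that is a conditional information density with a variable-length conditioning variable, and no cited theorem delivers it. So two of the three pieces of your decomposition are asserted rather than proved, and they are precisely the content of the lemma.

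This is why the paper (following Dobrushin's treatment of the deletion channel) introduces the segmented process $z(X^n,D^n)$: cutting $X^n$ into blocks of length $k$ and recording the deletion output of each block separately restores a structure to which a law of large numbers across blocks applies (conditional independence of the middle parts given the boundary bits), one shows that replacing $y$ by $z$ changes the normalized information density by $O((\log k)/k)$ in probability (via $H(L_Z)\le (3g+1)\log k$ and Markov's inequality), and then $k\to\infty$ recovers the target quantity. Some device of this kind is needed in your argument; without it the conditional AEP is unproven. A secondary, smaller omission: the lemma's statement presumes the limit $\lim_n (1/n)H(X^n|y(X^n,D^n),D_0,D_{n+1})$ exists, which the paper establishes separately by superadditivity and Fekete's lemma; your proposal never addresses this (it would follow from your concentration claims only with an additional uniform-integrability step, which you also do not supply).
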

The proof of Lemma~\ref{lem:Rit} is given in
Appendix~\ref{app:proofthmRit}. The structure of the proof is as
follows: (1) we show that the limit $\lim_{n\rightarrow
\infty}(1/n)H(X^n|y(X^n,D^n),D_0,D_{n+1})$ exists, (2) using the
information-spectrum method \cite[Section 7.2]{Han}, we have
$R_{min}= \overline{H}(X^n|y(X^n,D^n)):=$
$\mbox{p-}~\!\!\!\!\!\!\!\limsup_{n\rightarrow \infty} (1/n)\log
(1/p_{X^n|y(X^n,D^n)}(X^n|y(X^n,D^n)))$, which is the conditional
spectral sup-entropy, (3) we show that
$\overline{H}(X^n|y(X^n,D^n))=\lim_{n\rightarrow
\infty}(1/n)H(X^n|y(X^n,D^n),D_0,D_{n+1})$. The techniques we use in
step (3) are similar to those Dobrushin used in \cite{Dobrushin},
where the capacity of the iid deletion channel is characterized by
$\lim_{n\rightarrow \infty} (1/n) \max _{p_{X^n}} I(X^n;
y(X^n,D^n))$.

%The proof of Lemma~\ref{lem:Rit} is given in
%Appendix~\ref{app:proofthmRit}. According to Lemma~\ref{lem:Rit},
%the number of source sequences $r$ is no longer needed in the
%expression of the minimum rate. We can focus on a single instance of
%the source sequence $X^n$ and the deletion pattern $D_0^{n+1}$.

%\begin{remark}\label{rem:infinitysup}
%The fact that $R_\infty = \sup_{n\geq 1} R_n$ shows that
%synchronizing the source sequences in the limit of $n\rightarrow
%\infty$ is harder than doing it for any fixed $n$. The rate
%$R_\infty$ is necessary and sufficient to synchronize source
%sequences with any length. This is the reason we focus on this
%limit.
%\end{remark}

%We further conjecture that $R_{op} = R_{min}$ holds. We make such
%conjectures based on following reasons.
%
%... ... (If the conjecture cannot be proved before the deadline,
%this part needs to be filled. If the conjecture can be proved, the
%proof will not be included in the ISIT submission anyway, because of
%the page limit.)

In Lemma~\ref{lem:breakdown}, the information theoretic expression
of the minimum rate is written in another way, which has a more
intuitive interpretation as explained in Remark~\ref{rem:breakdown}.

\begin{lemma}\label{lem:breakdown}
\begin{equation}
R_{min} = d + H(D_1|D_0) - E_{\infty},\label{eqn:breakdown}
\end{equation}
where $E_{\infty} := \lim_{n\rightarrow \infty} E_n$, and $E_n:=
H(D_1|D_0,X^n,y(X^n,D^n),D_{n+1})$.
\end{lemma}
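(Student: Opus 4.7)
The plan is to apply Lemma~\ref{lem:Rit} and expand $H(X^n\mid y(X^n,D^n),D_0,D_{n+1})$ by introducing the deletion pattern $D^n$ via the chain rule. Equating the two orderings of the joint entropy $H(X^n,D^n\mid y,D_0,D_{n+1})$ yields
\begin{align*}
H(X^n\mid y,D_0,D_{n+1}) &= H(X^n\mid D^n,y,D_0,D_{n+1}) + H(D^n\mid y,D_0,D_{n+1}) \\
&\quad{} - H(D^n\mid X^n,y,D_0,D_{n+1}),
\end{align*}
and the goal reduces to showing that the three terms, after dividing by $n$, tend to $d$, $H(D_1\mid D_0)$, and $E_\infty$ respectively, so that Lemma~\ref{lem:Rit} delivers the claim.

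The first two pieces are direct. Given $D^n$ and $y$, the undeleted positions of $X^n$ are determined by $y$ and the $|D^n|$ deleted positions are iid $\mathrm{Bernoulli}(1/2)$ independent of everything else, so $H(X^n\mid D^n,y,D_0,D_{n+1})=\eE[|D^n|]=nd$. For the second piece, conditional on $D^n$ the sequence $y$ is iid $\mathrm{Bernoulli}(1/2)$ of deterministic length $L_y=n-|D^n|$, and conditional only on $L_y$ it is still iid $\mathrm{Bernoulli}(1/2)$ of length $L_y$ (since $X$ is iid and independent of $D$). Hence $I(D^n;y\mid D_0,D_{n+1})=H(L_y\mid D_0,D_{n+1})=O(\log n)$, and combined with $H(D^n\mid D_0,D_{n+1})=nH(D_1\mid D_0)+O(1)$ from Markov stationarity this gives $(1/n)H(D^n\mid y,D_0,D_{n+1})\to H(D_1\mid D_0)$.

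The main obstacle is the third piece. I would apply the chain rule again to write
\[H(D^n\mid X^n,y,D_0,D_{n+1}) = \sum_{i=1}^{n} H\bigl(D_i\bigm| D_0^{i-1},X^n,y(X^n,D^n),D_{n+1}\bigr),\]
and then show each summand equals $E_{n-i+1}$. Extend $(X_j,D_j)$ to $j\in\zZ$, which is possible since $X$ is iid and $D$ is a stationary Markov chain. By stationarity, a time shift by $-(i-1)$ converts the $i$-th summand into
\[H\bigl(D_1\bigm| D_{1-i}^0,X_{2-i}^{n-i+1},y(X_{2-i}^{n-i+1},D_{2-i}^{n-i+1}),D_{n-i+2}\bigr).\]
Since the conditioning already includes $(X_{2-i}^0,D_{2-i}^0)$, the past prefix $y(X_{2-i}^0,D_{2-i}^0)$ of the shifted deleted sequence can be computed exactly (this uses that deletions preserve the order of surviving bits), so stripping it off recovers the future suffix $y_\mathrm{fut}:=y(X_1^{n-i+1},D_1^{n-i+1})$; the conditioning set is thus informationally equivalent to $(D_{1-i}^0,X_{2-i}^{n-i+1},y_\mathrm{fut},D_{n-i+2})$. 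Finally, $X_{2-i}^0$ is independent of everything and, by the Markov property of $D$ together with $X\perp D$, $D_{1-i}^{-1}$ is independent of $(D_1,X_1^{n-i+1},y_\mathrm{fut},D_{n-i+2})$ given $D_0$; dropping these extra variables therefore preserves the entropy, yielding $H(D_1\mid D_0,X_1^{n-i+1},y_\mathrm{fut},D_{n-i+2})=E_{n-i+1}$.

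Summing gives $H(D^n\mid X^n,y,D_0,D_{n+1}) = \sum_{m=1}^n E_m$, and since $E_m\to E_\infty$ by the definition of $E_\infty$, the Ces\`aro average $(1/n)\sum_{m=1}^n E_m$ converges to $E_\infty$. Combining the three contributions via Lemma~\ref{lem:Rit} yields $R_{min}=d+H(D_1\mid D_0)-E_\infty$. The most delicate step is the informational equivalence linking the shifted summand to $E_{n-i+1}$: it rests on the clean past/future split of the deleted sequence made available by knowledge of the past $X$'s and $D$'s.
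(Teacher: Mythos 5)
Your proof is correct, but it takes a genuinely different route from the paper. The paper first proves an intermediate lemma identifying $R_{min}$ with $\lim_n J_n$, $J_n = d + (1/n)H(y(X^n,D^n)|X^n,D_0,D_{n+1})$, and then extracts $E_\infty$ by expanding $I(D_1;y(X^n,D^n)|X^n,D_0,D_{n+1})$ in two ways, which produces the recursion $H(D_1|D_0,D_{n+1})-E_n = n(J_n-J_{n-1})+J_{n-1}-d$; the limit is then taken using mixing of the Markov chain, monotone convergence of $E_n$, and the observation that $n(J_n-J_{n-1})\to 0$ because $J_n$ converges. You instead expand $H(X^n,D^n|y,D_0,D_{n+1})$ both ways and evaluate the three resulting terms directly: the content term is exactly $nd$, the location term is $nH(D_1|D_0)+O(\log n)$, and — this is your key contribution — the ``nature's secret'' term satisfies $H(D^n|X^n,y,D_0,D_{n+1})=\sum_{m=1}^n E_m$, obtained per index by stripping the computable prefix $y(X^{i-1},D^{i-1})$ from $y(X^n,D^n)$, discarding the past $(X^{i-1},D_0^{i-2})$ via conditional independence given $D_{i-1}$ (weak union), and invoking stationarity; a Ces\`aro argument then finishes. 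Your route is more direct and makes the interpretation of Remark~\ref{rem:breakdown} literal, since it exhibits the per-bit deletion-pattern uncertainty as an average of the $E_m$, whereas the paper's route avoids the prefix-stripping equivalence at the cost of the telescoping limit argument. One small reliance to flag: you write that $E_m\to E_\infty$ ``by the definition of $E_\infty$,'' but Ces\`aro convergence of the averages does not by itself give convergence of $E_n$; the paper closes this with Lemma~\ref{lem:Enincreasing} (each $E_n\geq E_{n-1}$ by conditioning additionally on $D_n$, and $E_n\leq 1$), and you should either cite that monotonicity argument or reproduce its two-line proof to make your identification of the Ces\`aro limit with $E_\infty$ complete.
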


The proof of Lemma~\ref{lem:breakdown} is given in
Appendix~\ref{app:proofbreakdown}.

\begin{remark}
\label{rem:breakdown} Lemma~\ref{lem:breakdown} expresses $R_{min}$
in terms of three parts, which can be intuitively interpreted as
follows. The first term $d$ is the fraction of deleted bits in
$X^n$. It represents the amount of information per source bit in the
deleted content, and thus the rate needed to send the deleted
content. The second term is the entropy rate of the deletion pattern
process, which is the rate needed to describe the locations of
deletions. If the encoder knew the locations and sent them together
with the deleted content, the decoder could reproduce $X^n$.
However, this is excessive information.
 In fact, even if the decoder can correctly reproduce $X^n$,
it can never know the exact deletion pattern. Therefore the
uncertainty of the deletion pattern $D^n$, given $X^n$ and
$y(X^n,D^n)$, is not required to be revealed in order to reproduce
$X^n$.

The uncertainty in the deletion pattern, given the source sequence
and side-information is the \emph{nature's secret}, which is known
only to an imaginary third party (nature) who generates the deletion
pattern. Since nature's secret is not required to reproduce $X^n$,
it should be subtracted from the message rate.
Lemma~\ref{lem:breakdown} shows that nature's secret per source bit,
which is the uncertainty in the whole deletion pattern $D^n$
normalized by $n$, can be expressed as $E_{\infty}$, which is the
uncertainty in only $D_1$. An intuitive explanation is that, the
uncertainty in each bit in $D^n$ is approximately the same,
therefore the uncertainty can be represented by the uncertainty in
only $D_1$.
%This simplification from $D^n$ to $D_1$
%is similar in flavor to the single-letterization in canonical
%information theory problems. The uncertainty of $D_1$, however, is
%not in a pure single-letter form, because $E_{\infty}$ is expressed
%as the limit of $E_n$ as the blocklength goes to infinity.
\end{remark}

\section{Asymptotic behavior of $R_{min}$ for small values of $\beta$}
\label{sec:smallbeta}

In typical settings the number of edits is often much less than the
file size. Since $\beta$ is the probability to start a burst of
deletions, the asymptotic behavior of $R_{min}$ for small $\beta$ is
of special interest.

\subsection{Case 1: Few number of long bursts of deletion: $\alpha\ll 1, \beta\ll 1$, and $\alpha/\beta$ is fixed}
\label{subsec:smallalphasmallbeta}

When $\alpha\ll 1, \beta\ll 1$ and $\alpha/\beta$ is fixed, the
number of bursts are much smaller than the length of the sequence,
and each burst is so long that the overall fraction of deletion
$d=\beta/(\alpha+\beta)$ is a constant.

On the right side of (\ref{eqn:breakdown}), the first term $d$ is a
constant. For any $\epsilon>0$, the second term $H(D_1|D_0)=d
h_2(\alpha)+(1-d)h_2(\beta)= O(\beta^{1-\epsilon})$, and the third
term $E_{\infty}\leq H(D_1|D_0)=O(\beta^{1-\epsilon})$. According to
Lemma~\ref{lem:breakdown}, we have
\[R_{min}(\alpha, \beta) = d + O(\beta^{1-\epsilon}).\]
Intuitively speaking, if we have a small number of long bursts of
deletion, the amount of information of the locations of deletions is
orderwise less than the amount of information of the content of
deletion. Therefore $R_{min}$ is dominated by the rate needed to
deliver the deleted content.

A more interesting case is when all three terms of
(\ref{eqn:breakdown}) are comparable.

\subsection{Case 2: Few number of short bursts of deletion: $\alpha$ is fixed and $\beta\ll 1$}
\label{subsec:fixedalphasmallbeta}

When $\alpha$ is fixed and $\beta\ll 1$, the number of bursts is
much smaller than the length of the sequence. Since the length of a
burst is drawn from a geometric distribution with parameter
$\alpha$, the expected length is of order $\Theta(1)$. The overall
proportion of deleted bits is $d = \beta/(\alpha+\beta)
=\beta/\alpha + \Theta(\beta^2)$. In this case, unlike in Case~1,
the location information and ``nature's secret'' are comparable to
the content information. Therefore we need to evaluate all three
terms for this case. The three terms on the right side of
(\ref{eqn:breakdown}) are evaluated as follows. For any $\epsilon
>0$, we have
\begin{eqnarray}
d&=& \beta/\alpha + \Theta(\beta^2),\label{eqn:smallbetad}\\
H(D_1|D_0)&=& \!\!-\beta \log \beta + \frac{\beta
h_2(\alpha)}{\alpha} +
\beta \log e  + O(\beta^{2-\epsilon}),\label{eqn:smallbetaentropyrate}\\
-E_\infty &=& -C \beta +
O(\beta^{2-\epsilon}),\label{eqn:smallbetasecret}
\end{eqnarray}
where $C = \sum_{l=1}^{\infty}2^{-l-1}l \log l\approx 1.29$.
Combining (\ref{eqn:smallbetad}) through (\ref{eqn:smallbetasecret})
gives Theorem~\ref{thm:smallbeta}.

The proofs of (\ref{eqn:smallbetad}) and
(\ref{eqn:smallbetaentropyrate}) are trivial. The proof of
(\ref{eqn:smallbetasecret}) is highly nontrivial and is the essence
of the proof of Theorem~\ref{thm:smallbeta}. The complete proof of
(\ref{eqn:smallbetasecret}) is given in
Appendix~\ref{app:proofsmallbeta}. In this subsection we explain
only the intuition of (\ref{eqn:smallbetasecret}).

%In Theorem~\ref{rem:breakdown} we decompose $R_{min}$ into three
%terms. The first and second terms are explicit, and correspond to
%$-\beta \log \beta$ and all the $\Theta(\beta)$ terms in
%(\ref{eqn:smallbeta}) except for $-C\beta$. The proof of
%Theorem~\ref{thm:smallbeta} shows that ``nature's secret'' is
%$E_\infty = C\beta + O(\beta^{2-\epsilon})$. Here is an intuitive
%way to see $E_\infty \approx C\beta$.

Let us first consider the case that the deletion is not bursty
($\alpha=1$), i.e., no consecutive bits are deleted. In order to
evaluate nature's secret $E_\infty$ we need to estimate the
uncertainty in $D_1$ given $X^n, y(X^n,D^n), D_0$ and $D_{n+1}$. The
uncertainty is significant if the first run of $X^n$ is different
from the first run of $y(X^n,D^n)$. For example, if $X^n=(0,0,0,1)$
and $y(X^n,D^n)=(0,0,1)$, we know that one bit is deleted in the
first run (first three bits) of $X^n$, but do not know which bit is
deleted. The true identity of the deleted bit is nature's secret.
Since there are three equally likely possible deletion patterns and
only one leads to $D_1=1$, the conditional entropy of $D_1$ is
$h_2(1/3)$. The length of the first run of $X^n$ is $L$, a
geometrically distributed random variable with parameter $1/2$. If
one bit is deleted in the first run, the conditional entropy is
$h_2(1/L)$. The probability that any bit in $L$ bits is deleted is
roughly $L\beta$, therefore the average uncertainty is
$\eE[h_2(1/L)L \beta]=\left(\sum_{l=1}^{\infty}h_2(1/l) 2^{-l} l
\right)\beta =\left(\sum_{l=1}^{\infty}2^{-l-1}l \log l\right)\beta
=C \beta$.\footnote{In this section we only provide an intuitive
explanation using a simplified case that there is only one burst of
deletion. In a rigorous proof it is shown that with high probability
the first burst of deletion can be isolated from the other bursts so
that the general case is reduced to the simplified case. See
Appendix~\ref{app:proofsmallbeta} for details.}

Let us now extend the discussion in the previous paragraph to the
case of bursty deletions ($\alpha <1$). First, we need to generalize
the usual definition of ``run'' to $b$-run.

\begin{definition}\label{def:brun}
For any $b$ and $l\in \zZ^+$, a sequence $(x_1,\ldots,x_{b+l-1})$ is
called a $b$-run of extent $l$ if for all $i,j$ satisfying $(i\equiv
j\mod b)$, $x_i=x_j$ holds.
\end{definition}

For example, $(1,1,1,1,1)$ is a $1$-run of extent $5$, and $1$-run
is the usual definition of a run. The sequence $(1,0,1,0,1)$ is a
$2$-run of extent $4$. Note that there are $l$ different ways to
delete $b$ consecutive bits in a sequence of length $l+b-1$. A
special property of a $b$-run of extent $l$ is that, all the $l$
ways of deletion result in the same outcome. For example, all four
ways of deleting two consecutive bits in $(1,0,1,0,1)$ lead to the
same outcome $(1,0,1)$. This observation is formally stated in the
following fact.
\begin{fact}\label{fact:deletionbrun}
Let $x^{b+l-1}$ be a $b$-run of extent $l$. Let $\mathbf d_{i,b}$
denote the sequence of $(i-1)$ $0$'s followed by $b$ $1$'s, then
followed by $(l-i)$ $0$'s. Then $y(x^{b+l-1}, \mathbf d_{i,b})$ is
the same for all $i=1,\ldots,l$.
\end{fact}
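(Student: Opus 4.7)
The plan is to argue directly by induction on $i$, comparing the deletion outcomes for adjacent values of the starting position. Write $r(i) := y(x^{b+l-1}, \mathbf{d}_{i,b})$ for the sequence that remains after deleting positions $i, i+1, \ldots, i+b-1$ from $x^{b+l-1}$. Explicitly,
\[
r(i) = (x_1,\ldots,x_{i-1},\, x_{i+b}, x_{i+b+1}, \ldots, x_{b+l-1}),
\]
a sequence of length $l-1$. It suffices to show $r(i) = r(i+1)$ for every $i = 1,\ldots,l-1$.

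Writing both sequences out side by side, the first $i-1$ coordinates of $r(i)$ and $r(i+1)$ agree (they are $x_1,\ldots,x_{i-1}$), and the last $l-1-i$ coordinates of $r(i)$ and $r(i+1)$ also agree (they are $x_{i+b+1},\ldots,x_{b+l-1}$). The only position where they could differ is the $i$-th coordinate: in $r(i)$ it equals $x_{i+b}$, whereas in $r(i+1)$ it equals $x_i$. Since $i+b \equiv i \pmod b$, the defining property of a $b$-run of extent $l$ in Definition~\ref{def:brun} gives $x_{i+b} = x_i$, so the $i$-th coordinates match as well. Hence $r(i) = r(i+1)$, and iterating yields $r(1) = r(2) = \cdots = r(l)$, which is the claim.

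I do not anticipate any real obstacle: the statement is essentially the observation that a $b$-run is periodic with period dividing $b$, and removing any window of $b$ consecutive symbols from a period-$b$ sequence leaves the same shortened sequence. The only thing to be careful about is the boundary indexing ($i=1$ and $i=l$), which is handled automatically because the ``agreeing prefix'' is empty when $i=1$ and the ``agreeing suffix'' is empty when $i+1=l$, so the one-coordinate comparison argument still applies. No separate base case is required beyond observing that the prefixes/suffixes match by construction.
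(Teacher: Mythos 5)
Your proof is correct. The paper actually states Fact~\ref{fact:deletionbrun} without any proof (it is treated as an immediate observation), so there is no authorial argument to compare against; your adjacent-index comparison, showing $r(i)$ and $r(i+1)$ can differ only in coordinate $i$ where $x_{i+b}=x_i$ by the defining $b$-run property, is a clean and complete way to fill that gap. A marginally more direct variant is to note that for each $j\in\{1,\dots,l-1\}$ the $j$-th coordinate of $r(i)$ is either $x_j$ (if $j<i$) or $x_{j+b}$ (if $j\geq i$), and these are equal by periodicity since $j+b\leq b+l-1$, so $r(i)=(x_1,\dots,x_{l-1})$ independently of $i$; but your telescoping version is equivalent and equally rigorous.
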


%We also talk about the first $b$-run of a sequence.

\begin{definition}\label{def:firstbrun}
For any $b\in \zZ^+$, the first $b$-run of a sequence
$(x_1,\ldots,x_{n})$ is the longest segment starting from $x_1$ that
is a $b$-run.
\end{definition}

For example, the first $2$-run of $(0,1,0,1,1)$ is $(0,1,0,1)$.

Now let us consider the uncertainty in $D_1$ given $X^n, y(X^n,D^n),
D_0$ and $D_{n+1}$ through an example. If we know that a burst of
$2$ bits is deleted in $X^n = (0,1,0,1,1)$ to produce
$y(X^n,D^n)=(0,1,1)$, we know that the deletion occurs within the
first $2$-run, i.e., $(0,1,0,1)$. Since there are three
indistinguishable deletion patterns, $(1,1,0,0,0)$, $(0,1,1,0,0)$,
and $(0,0,1,1,0)$, among which only the first one satisfies $D_1=1$,
the conditional entropy of $D_1$ is $h_2(1/3)$.

For any $b$, the extent of the first $b$-run, $L$, is a
geometrically distributed random variable with parameter $1/2$, as
in the non-bursty case. This fact can be seen by sequentially
generating $X_1, X_2, \ldots$. For arbitrary realization of
$X^b=x^b$,  $X^b$ always belongs to the first $b$-run. If the first
$b$-run has been extended to the $(i-1)$-th bit, it will be extended
to the $i$-th bit if $X_i=x_{i-b}$, which occurs with probability
$\frac{1}{2}$. Therefore the extent of the first $b$-run is a
geometrically distributed variable. If one burst of $b$ is deleted
in the first $b$-run, the conditional entropy of $D_1$ is
$h_2(1/L)$. Since given the length of burst $b$, the probability
that any deletion pattern among all $L$ possible deletion patterns
occurs is roughly $L\beta$, the average uncertainty of
$\eE[h_2(1/L)L\beta]=C\beta$. Note that the result is the same for
all $b$. In other words, nature's secret is always $C\approx 1.29$
bits \emph{per burst}, regardless of the length of burst.

\begin{remark}
Since nature's secret is $C\beta + O(\beta^{2-\epsilon})$ for
\emph{any given value of the length of burst} $b \in \zZ^+$, the
fact that nature's secret averaged across different possible values
of $b$ is $C\beta + O(\beta^{2-\epsilon})$, regardless of the
distribution of the length of a burst of deletions. This implies
that Theorem~\ref{thm:smallbeta} may generalize to more general
deletion processes beyond the two-state Markov chains.
%This property holds when the length of burst follows the geometric
%distribution in this paper. And it may still hold in a general
%problem setup when the length of burst follows other distributions.
In order to draw a rigorous statement, however, one has to revisit
Lemmas~\ref{lem:Rit} and \ref{lem:breakdown} and prove them for the
general setup.
\end{remark}

\section{Concluding Remarks}
We studied the distributed source coding problem of synchronizing
source sequences based on bursty deletion side-information. We
evaluated the two leading terms of the minimum achievable rate for
small deletion rate. Directions for future work include considering
insertions in addition to deletions, and evaluating the leading
terms of the capacity of the bursty deletion channel.

%\vspace{-0.1in}
%%%%%%%%%%%%%%%%%%%%%%%%%%%%%%%%%%%%%%%%%%%%%%%%%%

\appendices
\renewcommand{\theequation}{\thesection.\arabic{equation}}
\setcounter{equation}{0}

\section{Proof of Lemma~\ref{lem:Rit}}\label{app:proofthmRit}
(1) We first show that $R_n := (1/n)H(X^n|y(X^n,D^n),D_0,D_{n+1})$
converges as $n\rightarrow \infty$, so that the limit in the
statement of Lemma~\ref{lem:Rit} is well defined.
%We will show that (1) the limit $R_{min}=\lim_{n\rightarrow
%\infty} R_n$ exists, and (2) $R_{min} = \sup \{R_n\}_{n\in \nN}$.

%\subsection{Existence of $R_{min}$}
For all $m\in \{1,\ldots,n-1\}$, we have
\begin{eqnarray*}
nR_n&=&H(X^n|y(X^n,D^n),D_0,D_{n+1}) \\
&\stackrel{(a)}{\geq}&H(X^n|y(X^m,D^m),y(X_{m+1}^n,D_{m+1}^n),D_0,D_{n+1})\\
&\geq&H(X^m|y(X^m,D^m),y(X_{m+1}^n,D_{m+1}^n),D_0,D_{n+1},D_{m+1})\\
&&+H(X_{m+1}^n|y(X^m,D^m),y(X_{m+1}^n,D_{m+1}^n),D_0,D_{n+1},D_m)\\
&\stackrel{(b)}{=}&H(X^m|y(X^m,D^m),D_0,D_{m+1})\\
&& + H(X_{m+1}^n|y(X_{m+1}^n,D_{m+1}^n),D_{n+1},D_m)\\
&=&H(X^m|y(X^m,D^m),D_0,D_{m+1})\\
&& + H(X^{n-m}|y(X^{n-m},D^{n-m}),D_{0},D_{n-m})
\end{eqnarray*}
where step (a) holds because the tuple
$(y(X^m,D^m),y(X_{m+1}^n,D_{m+1}^n))$ determines $y(X^n,D^n)$, and
step (b) holds because the Markov chains
$(y(X_{m+1}^n,D_{m+1}^n),D_{n+1})-D_{m+1}-(X^m,y(X^m,D^m),D_0)$ and
$ (y(X^m,D^m),D_0) - D_m -
(X_{m+1}^n,y(X_{m+1}^n,D_{m+1}^n),D_{n+1})$ hold. Therefore the
sequence $\{n R_n \}_{n\in \nN}$ is superadditive. By Fekete's
lemma\cite{Feketeslemma}, the limit $\lim_{n\rightarrow \infty} R_n$
exists.

(2) Using the information-spectral version of the Slepian-Wolf
theorem \cite[Section 7.2]{Han}, we have $R_{min}=
\overline{H}(X^n|y(X^n,D^n)):=$
$\mbox{p-}~\!\!\!\limsup_{n\rightarrow \infty} (1/n)\log
(1/p_{X^n|y(X^n,D^n)}(X^n|y(X^n,D^n)))$. In the rest of this
appendix, for any random variables $A, B$, we abbreviate $p_A(A)$
and $p_{A|B}(A|B)$ to $p(A)$ and $p(A|B)$, respectively, to avoid
cumbersome notations.

(3) Now we show that the sequence of random variables $(1/n)\log
(1/p(X^n|y(X^n,D^n)))$ converges in probability to the limit
$\lim_{n\rightarrow \infty} R_n$.

We introduce a segmented deletion process as follows. Let $k \geq 3$
be the length of a segment. Let $g :=\lfloor n/k \rfloor$ be the
number of complete segments and $l := n-gk$ be the length of the
remainder. Consider the outcome of a segmented deletion process as
follows: let $z(X^n, D^n) := (Z_{1L},Z_{1M},Z_{1R},
\ldots,Z_{gL},Z_{gM},Z_{gR}, Z_{remainder})$ be a vector with
$(3g+1)$ components, where $\forall i= 1,\ldots,g$, $Z_{iL}:=
y(X_{(i-1)k+1},D_{(i-1)k+1})$, $Z_{iM}:=
y(X_{(i-1)k+2}^{ik-1},D_{(i-1)k+1}^{ik-1})$, $Z_{iR}:= y(X_{i
k},D_{i k})$, and $Z_{remainder} := y(X_{gk+1}^{n},D_{gk+1}^{n})$.
From $z(X^n, D^n)$ we can find out how many source bits are deleted
in each segment and the remainder, and whether the first and last
bits of each segment are deleted. The sequence $y(X^n, D^n)$ can be
obtained by merging all the $(3g+1)$ components of $z(X^n, D^n)$.
Therefore the sequence $z(X^n, D^n)$ contains more information than
$y(X^n, D^n)$. We will first fix $k$ and let $n$ go to infinity.
Then we increase $k$ to prove the final result.

The statement to be proved is based on the following three facts.

\begin{fact}\label{fact:YtoZ} For any $k\geq 3$, $n$ and any $\delta>0$, there
exists a function $\epsilon_1(k)$ satisfying $\lim_{k\rightarrow
\infty} \epsilon_1(k)=0$, so that
\begin{equation*}
\pP\left(\frac{1}{n}\left|\log \frac{1}{p(X^n|y(X^n,D^n))} - \log
\frac{1}{p(X^n|z(X^n,D^n))}\right| > \delta \right)\leq
\frac{\epsilon_1(k)}{\delta}.
\end{equation*}
\end{fact}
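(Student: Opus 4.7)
The plan is to interpret the log-ratio as an information density of $z$ relative to $y$, bound its expected absolute value by a multiple of $H(z\,|\,y)$, argue that this conditional entropy scales like $(n\log k)/k$ by a direct counting argument, and then invoke Markov's inequality.

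First I would observe that the concatenation of the components of $z(X^n,D^n)$ is exactly $y(X^n,D^n)$, so $y$ is a deterministic function of $z$ and therefore $p(X^n|z)=p(X^n|z,y)$. Bayes' rule then gives
\[
\log\frac{p(X^n|z)}{p(X^n|y)}=\log\frac{p(z|X^n,y)}{p(z|y)}.
\]
Since $-\log p(z|X^n,y)\geq 0$ and $-\log p(z|y)\geq 0$, the triangle inequality yields
\[
\left|\log\frac{p(z|X^n,y)}{p(z|y)}\right|\leq -\log p(z|X^n,y)-\log p(z|y),
\]
whose expectation equals $H(z|X^n,y)+H(z|y)\leq 2H(z|y)$.

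Next I would bound $H(z|y)$ by a block-length counting argument: given $y$, the vector $z$ is determined by the $3g{+}1$ block lengths, namely $|Z_{iL}|,|Z_{iR}|\in\{0,1\}$, $|Z_{iM}|\in\{0,\ldots,k-2\}$ for $i=1,\ldots,g$, and $|Z_{\text{remainder}}|\in\{0,\ldots,l\}$. Hence
\[
H(z|y)\;\leq\;2g+g\log(k-1)+\log(l+1),
\]
so for $n\geq k$ we get $(2/n)H(z|y)=O((\log k)/k)$; for $n<k$ we have $g=0$, $z=Z_{\text{remainder}}=y$, and the log-ratio vanishes identically. Picking any $\epsilon_1(k)=O((\log k)/k)$ that dominates $2H(z|y)/n$ uniformly in $n$, Markov's inequality applied to the nonnegative random variable $(1/n)|\log p(X^n|z)/p(X^n|y)|$ delivers the desired bound $\epsilon_1(k)/\delta$, with $\epsilon_1(k)\to 0$ as $k\to\infty$.

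I do not expect a genuine obstacle: the content of the fact is exactly that $z$ carries only $O(g)=O(n/k)$ pieces of boundary information beyond $y$, each of size $O(\log k)$, so the per-symbol divergence between conditioning on $y$ versus on $z$ vanishes as $k$ grows. The only care required is to make $\epsilon_1(k)$ independent of $n$, which the bound above achieves because $(\log k)/n\leq(\log k)/k$ once $n\geq k$ and the $n<k$ case is trivial.
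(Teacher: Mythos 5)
Your proposal is correct and follows essentially the same route as the paper: both arguments reduce to the observation that, given $y(X^n,D^n)$, the extra information in $z(X^n,D^n)$ is just the $3g+1$ block lengths, so the relevant conditional entropies are $O(g\log k)=O(n\log k/k)$, and Markov's inequality finishes the job. The only (immaterial) difference is packaging: you use Bayes' rule to bound the conditional log-ratio by $H(z|X^n,y)+H(z|y)\leq 2H(z|y)$ and apply Markov once, whereas the paper bounds the marginal difference $\log p(y)-\log p(z)$ and the joint difference $\log p(X^n,y)-\log p(X^n,z)$ separately (each sign-definite, with expectation $H(z|y)$ and $H(z|X^n,y)$ respectively) and combines two Markov bounds.
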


\begin{fact}\label{fact:ZtoH} For any $k$ and any $\delta>0$, there
exists a function $\epsilon_2(k)$ satisfying $\lim_{k\rightarrow
\infty} \epsilon_2(k)=0$, so that as $n\rightarrow \infty$,
\begin{eqnarray*}
\lefteqn{\pP\left(\left|\frac{1}{n} \log
\frac{1}{p(X^n|z(X^n,D^n))}\right.\right.}\\
&&\left.\left.-\frac{1}{k}
H(X_2^{k-1}|y(X_2^{k-1},D_2^{k-1}),D_1,D_k)\right|>\delta\right)\\
&\leq& \frac{\epsilon_2(k)}{\delta}.
\end{eqnarray*}
\end{fact}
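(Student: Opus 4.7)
The plan is to exploit the segmentation implicit in $z(X^n, D^n)$ in order to factor the log-posterior $-\log p(X^n \mid z(X^n, D^n))$ as a sum of i.i.d.\ per-segment contributions, then apply a weak law of large numbers via Markov's inequality. The key observation is that $z(X^n, D^n)$ explicitly reveals the boundary deletion variables $D_{(i-1)k+1}$ and $D_{ik}$ for every $i$ (through $Z_{iL}$ and $Z_{iR}$), so by the Markov property of $D^n$ and the independence of $X^n$ from $D^n$, the segments become conditionally independent given $z$.

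First I would prove the factorization: conditional on $z := z(X^n, D^n)$, the segments $W_i := X_{(i-1)k+1}^{ik}$, $i=1,\ldots,g$, are mutually independent, and each $W_i$ depends only on $z_i := (Z_{iL}, Z_{iM}, Z_{iR})$. This uses that the internal deletion pattern of segment $i$ is a Markov bridge whose distribution, given its two endpoint $D$-values, is independent of everything outside the segment. Consequently,
\[
-\log p(X^n \mid z) = \sum_{i=1}^{g} \bigl(-\log p(W_i \mid z_i)\bigr) \;+\; \bigl(-\log p(X_{gk+1}^{n} \mid z_{\mathrm{remainder}})\bigr),
\]
the $g$ summands are i.i.d.\ by stationarity, and the remainder contributes at most $k$ bits, i.e.\ $O(k/n)$ after normalization by $n$.

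Next I would compute the per-segment mean. Writing $W_i = (X_{(i-1)k+1}, V_i, X_{ik})$ with $V_i := X_{(i-1)k+2}^{ik-1}$, these three pieces are conditionally independent given $z_i$: the boundary bit $X_{(i-1)k+1}$ is determined by $Z_{iL}$ when $D_{(i-1)k+1}=0$ and is uniform otherwise, so $H(X_{(i-1)k+1} \mid z_i) = d$; symmetrically $H(X_{ik} \mid z_i) = d$; and since $V_i$ is independent of the boundary $X$-values, $H(V_i \mid z_i) = H(V_i \mid Z_{iM}, D_{(i-1)k+1}, D_{ik})$, which by stationarity equals $H(X_2^{k-1} \mid y(X_2^{k-1}, D_2^{k-1}), D_1, D_k)$. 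Therefore
\[
\eE\bigl[-\log p(W_i \mid z_i)\bigr] = 2d + H(X_2^{k-1} \mid y(X_2^{k-1}, D_2^{k-1}), D_1, D_k).
\]

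Finally, since the $g$ per-segment contributions are i.i.d.\ and bounded by $k$, the $L^1$ law of large numbers yields that $(1/n) \log(1/p(X^n \mid z))$ converges in $L^1$ to $(1/k)\bigl[2d + H(X_2^{k-1} \mid y(X_2^{k-1}, D_2^{k-1}), D_1, D_k)\bigr]$ as $n \to \infty$, with the remainder contribution being $O(k/n) \to 0$. Markov's inequality then gives $\pP(|\cdot| > \delta) \leq \eE|\cdot|/\delta$, whose $n \to \infty$ limit is at most $(2d/k)/\delta$, so setting $\epsilon_2(k) := 2d/k$ (possibly plus a vanishing-in-$k$ correction that absorbs finite-$n$ effects) completes the argument, since $\epsilon_2(k) \to 0$ as $k \to \infty$. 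The main technical obstacle is the rigorous derivation of the conditional-independence factorization: while intuitive from the Markov/i.i.d.\ structure, one must carefully track what $z$ reveals about $D^n$ versus $X^n$ and justify that further conditioning on the $Z_{iM}$'s across segments preserves the cross-segment independence already obtained from the Markov-bridge structure.
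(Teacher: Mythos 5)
Your proof is correct but takes a genuinely different route from the paper. The paper does not factor the posterior $p(X^n\mid z)$ directly; it instead decomposes $\log p(z)$ as $\log p(Z_B) + \sum_i \log p(Z_{iM}\mid Z_{iL},Z_{iR})$ (where $Z_B$ collects all boundary components $Z_{iL},Z_{iR},Z_{\mathrm{remainder}}$), controls the boundary term via Markov's inequality (its expectation is $(1/n)H(Z_B)=O(1/k)$ per source bit), applies the LLN to the middle sum, repeats the identical argument for $\log p(X^n,z)$, and then subtracts the two resulting bounds. You instead observe that conditional on $z$ the segments $W_i$ are mutually independent with $p(W_i\mid z)=p(W_i\mid z_i)$, so $-\log p(X^n\mid z)$ is itself a sum of per-segment terms, and you compute the per-segment mean $2d+H(X_2^{k-1}\mid y(X_2^{k-1},D_2^{k-1}),D_1,D_k)$ exactly. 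Your approach buys a single LLN step instead of two plus a subtraction, and a slightly sharper $\epsilon_2(k)=2d/k$ in place of the paper's $O((\log 3)/k)$; what it costs is the stronger conditional-independence claim about the posterior, which you rightly flag as the technical crux. That claim does hold: $z_i$ reveals $B_i=(D_{(i-1)k+1},D_{ik})$, and by the Markov property of $D^n$ together with the i.i.d.\ $X^n$, conditioning on $B_i$ screens $(W_i,D_{(i-1)k+2}^{ik-1})$ off from everything outside segment $i$, so $W_i\perp z_{-i}\mid z_i$. One shared caveat worth noting: in both your proof and the paper's, the per-segment summands are identically distributed but \emph{not} independent, since adjacent segments share the $D_{ik}$--$D_{ik+1}$ Markov transition; ``i.i.d.'' should read ``stationary and ergodic.'' The LLN still applies because the block process $\bigl(W_i,D_{(i-1)k+1}^{ik}\bigr)_{i\geq 1}$ is a stationary ergodic (indeed mixing) Markov chain, so this is not a substantive gap, just a point both arguments gloss over.
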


\begin{fact}\label{fact:HtoRmin}
\begin{eqnarray*}
\lefteqn{\lim_{k\rightarrow\infty} \frac{1}{k}
H(X_2^{k-1}|y(X_2^{k-1},D_2^{k-1}),D_1,D_k)}\\
&=&\lim_{k\rightarrow\infty} \frac{1}{k}
H(X^{k}|y(X^{k},D^{k}),D_0,D_{k+1}).
\end{eqnarray*}
\end{fact}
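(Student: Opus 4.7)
The plan is to reduce both sides of Fact~\ref{fact:HtoRmin} to the sequence $\{R_n\}$ defined at the start of the proof of Lemma~\ref{lem:Rit} and invoke the existence of $\lim_{n\to\infty} R_n$ already established in step~(1). The right-hand side of Fact~\ref{fact:HtoRmin} is manifestly $\lim_{k\to\infty} R_k$ by definition, so no work is needed there. The entire task is to identify the left-hand side with $\lim_{k\to\infty}\frac{k-2}{k}R_{k-2}$, which has the same limit.

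The key observation is stationarity of the joint process $\{(X_i,D_i)\}$: the source $X^n$ is i.i.d.\ Bernoulli$(1/2)$, and $\{D_i\}$ is a Markov chain started from its stationary distribution $p_{D_0}\sim \mathrm{Bernoulli}(d)$, so $\{(X_i,D_i)\}_{i\in\mathbb Z}$ is jointly stationary. Hence shifting every index down by one preserves the joint law, and in particular
\[
H(X_2^{k-1}\mid y(X_2^{k-1},D_2^{k-1}),D_1,D_k)
= H(X^{k-2}\mid y(X^{k-2},D^{k-2}),D_0,D_{k-1}).
\]
By the definition $R_n := (1/n)H(X^n\mid y(X^n,D^n),D_0,D_{n+1})$, the right-hand side here is exactly $(k-2)R_{k-2}$.

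Dividing by $k$ and taking $k\to\infty$ gives
\[
\lim_{k\to\infty}\frac{1}{k} H(X_2^{k-1}\mid y(X_2^{k-1},D_2^{k-1}),D_1,D_k)
= \lim_{k\to\infty}\frac{k-2}{k}\,R_{k-2}
= \lim_{n\to\infty} R_n,
\]
where the last equality uses $\frac{k-2}{k}\to 1$ together with the fact from step~(1) that $\lim_{n\to\infty}R_n$ exists (and is finite, being bounded by $1$). Since the right-hand side of Fact~\ref{fact:HtoRmin} is $\lim_{k\to\infty}R_k=\lim_{n\to\infty}R_n$ as well, the two limits coincide.

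There is no substantive obstacle here: the only ingredients are stationarity of $\{(X_i,D_i)\}$, the definition of $R_n$, and convergence of $\{R_n\}$ from step~(1). The one minor care needed is to make sure the index shift is valid, which is where stationarity of the Markov chain under its stationary initial distribution is used; if this is a concern, one can equivalently argue by extending $\{D_i\}$ to a two-sided stationary chain $\{D_i\}_{i\in\mathbb Z}$ and noting that both conditional entropies coincide with the corresponding quantity on the two-sided process, so the index shift becomes immediate.
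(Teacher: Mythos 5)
Your proof is correct and matches the paper's own argument: the paper likewise proves Fact~\ref{fact:HtoRmin} by noting the shift invariance $p_{X_2^{k-1},D_1^{k}} = p_{X^{k-2},D_0^{k-1}}$ (stationarity of the source and of the Markov deletion pattern under its stationary initial law) together with $(k-2)/k \rightarrow 1$, relying on the convergence of $R_n$ established in step (1). Your write-up simply makes explicit the identification of the left side with $(k-2)R_{k-2}$, which the paper leaves implicit.
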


\noindent {\em Proof of Fact~\ref{fact:YtoZ}:}

Since $y(X^n,D^n)$ can be determined by $z(X^n,D^n)$, there exists a
function $\phi_n$ such that $y(X^n,D^n) = \phi_n(z(X^n,D^n))$. For
any realization of $z(X^n,D^n)=z$, we have $\pP(z(X^n,D^n)=z)\leq
\pP(y(X^n,D^n)=\phi_n(z))$, which implies that $(1/n)\log
\pP(z(X^n,D^n)=z)- (1/n)\log\pP(y(X^n,D^n)=\phi_n(z))\leq 0$ always
holds. Let $L_Z$ be the vector of (3g+1) components representing the
lengths of all the components of $z(X^n,D^n)$. Then we have

\begin{eqnarray*}
\lefteqn{\eE \left| \frac{1}{n}\log p(y(X^n,D^n)) -
\frac{1}{n}\log p(z(X^n,D^n))\right|}\\
&=& \eE\left[\frac{1}{n}\log p(y(X^n,D^n))\right] -
\eE\left[\frac{1}{n}\log p(z(X^n,D^n))\right]\\
&=& \frac{1}{n}(-H(y(X^n,D^n)) + H(z(X^n,D^n)))\\
&=& \frac{1}{n}H(z(X^n,D^n)|y(X^n,D^n))\\
&=& \frac{1}{n}H(L_Z|y(X^n,D^n))\\
&\leq& \frac{1}{n}H(L_Z)\\
&\leq& \frac{1}{n}(3g + 1)\log k\\
&\leq& \frac{4 \log k}{k}.
\end{eqnarray*}
By Markov's inequality,
\[\pP\left(\frac{1}{n}\left|\log p(y(X^n,D^n)) -
\log p(z(X^n,D^n))\right|>\delta\right) \leq \frac{4 \log k}{k
\delta}.\]

Using the same argument we also have
\[\pP\left(\frac{1}{n}\left|\log p(X^n,y(X^n,D^n)) -
\log p(X^n,z(X^n,D^n))\right|>\delta\right) \leq \frac{4 \log k}{k
\delta}.\] Combining the last two inequalities completes the proof
of Fact~\ref{fact:YtoZ}. \hspace{\fill}\QED
\\

\noindent {\em Proof of Fact~\ref{fact:ZtoH}:}

Let $Z_B:=(Z_{1L},Z_{1R},\ldots,Z_{gL},Z_{gR},Z_{remainder})$. Then
\begin{eqnarray}
\lefteqn{\frac{1}{n}\log p(z(X^n,D^n))}\nonumber\\
%&\leq & \eE\left|-\frac{1}{n}\log p(Z_B)\right| + \frac{1}{n}\log
%p(Z_{1M},\ldots,Z_{gM}|Z_B)\\
&\stackrel{(c)}{=}& \frac{1}{n}\log p(Z_B) + \sum_{i=1}^g
\frac{1}{n}\log p(Z_{iM}|Z_B)\nonumber\\
&\stackrel{(d)}{=}& \frac{1}{n}\log p(Z_B) + \sum_{i=1}^g
\frac{1}{n}\log p(Z_{iM}|Z_{iL},Z_{iR}),\label{eqn:lln}
\end{eqnarray}
where step (c) holds because given $Z_B$, $Z_{1M},\ldots,Z_{gM}$ are
conditionally independent, and step (d) holds because $D^n$ is a
Markov chain.

Since the expectation of the first term of (\ref{eqn:lln}) is equal
to $(1/n)H(Z_B)\leq (2g+l)/n \log 3$, by Markov's inequality we have
$\pP((1/n)\log p(Z_B)>\delta)<(2g+l)\log 3/(n\delta)$.

Due to the law of large number, as $n\rightarrow \infty$, which
implies $g\rightarrow \infty$, the second term of (\ref{eqn:lln})
converges to $(1/k)H(y(X_{2}^{k-1},D_{2}^{k-1})|D_1,D_k)$ in
probability.

Therefore we have: for any $k$ and $n\rightarrow \infty$,
\begin{equation*}
\pP\left(\left|\frac{1}{n} \log \frac{1}{p(z(X^n,D^n))}-\frac{1}{k}
H(y(X_2^{k-1},D_2^{k-1}),D_1,D_k)\right|>\delta\right) \leq
\frac{\epsilon'_2(k)}{\delta}
\end{equation*}
for some $\epsilon'_2(k)$ which vanishes as $k$ increases.

Using the same argument we also have
\begin{eqnarray*}
\lefteqn{\pP\left(\left|\frac{1}{n} \log
\frac{1}{p(X^n,z(X^n,D^n))}\right.\right.}\\
&&\left.\left.-\frac{1}{k}
H(X_2^{k-1},y(X_2^{k-1},D_2^{k-1}),D_1,D_k)\right|>\delta\right)\\
&\leq& \frac{\epsilon''_2(k)}{\delta}.
\end{eqnarray*}
Combining the last two inequalities completes the proof of
Fact~\ref{fact:ZtoH} \hspace{\fill}\QED
\\

\noindent {\em Proof of Fact~\ref{fact:HtoRmin}:}
Fact~\ref{fact:HtoRmin} holds because (i) $p_{X_2^{k-1},D_2^{k-1}} =
p_{X^{k-2},D^{k-2}}$ and (ii) $(k-2)/k \rightarrow 1$ as
$k\rightarrow \infty$.
 \hspace{\fill}\QED

Combining Facts~\ref{fact:YtoZ} and \ref{fact:ZtoH}, we have: for
any fixed $k$ and $\delta$,  as $n\rightarrow \infty$,
\begin{eqnarray}
\lefteqn{\pP\left(\left|\frac{1}{n} \log
\frac{1}{p(X^n|y(X^n,D^n))}\right.\right.}\nonumber\\
&&\left.\left.-\frac{1}{k}
H(X_2^{k-1}|y(X_2^{k-1},D_2^{k-1}),D_1,D_k)\right|>\delta\right)\nonumber\\
&\leq& \frac{\epsilon_3(k)}{\delta}\label{eqn:spectrum2}
\end{eqnarray}
for some $\epsilon_3(k)$ which vanishes as $k$ increases. By
choosing a large enough $k$, the right hand side of
(\ref{eqn:spectrum2}) can be made arbitrarily small. Combining
(\ref{eqn:spectrum2}) and Fact~\ref{fact:HtoRmin}, the sequence of
random variables $(1/n)\log (1/p(X^n|y(X^n,D^n)))$ is shown to be
converging in probability to the limit $\lim_{n\rightarrow \infty}
R_n$.

 Combining (1), (2) and (3) we have $R_{min} =
\lim_{n\rightarrow \infty} (1/n)H(X^n|y(X^n,D^n),D_0,D_{n+1})$.

\section{Proof of Lemma~\ref{lem:breakdown}}\label{app:proofbreakdown}

We will first introduce a sequence $\{J_n\}_{n\in \nN}$ and show
that $\lim_{n\rightarrow \infty} J_n = R_{min}$.

\begin{lemma}\label{lem:Jn}
For all $n\in \zZ^+$,  let $J_n:= d +
(1/n)H(y(X^n,D^n)|X^n,D_0,D_{n+1})$. Then we have
$\lim_{n\rightarrow \infty} J_n = R_{min}$.
\end{lemma}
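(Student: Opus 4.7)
The plan is to reduce the claim to Lemma~\ref{lem:Rit} by directly comparing $J_n$ with $R_n := (1/n) H(X^n \mid y(X^n, D^n), D_0, D_{n+1})$. Applying the chain rule to $H(X^n, y(X^n,D^n) \mid D_0, D_{n+1})$ two ways, and using $H(X^n \mid D_0, D_{n+1}) = n$ (since $X^n$ is iid Bernoulli$(1/2)$ and independent of $(D_0,D_{n+1})$), I would obtain the clean identity
\[
R_n - J_n \;=\; (1 - d) \;-\; \frac{1}{n}\, H\bigl(y(X^n,D^n) \,\bigm|\, D_0, D_{n+1}\bigr).
\]
Thus the lemma reduces to showing $(1/n) H(y(X^n,D^n) \mid D_0, D_{n+1}) \to 1 - d$.

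To prove this, I would decompose
\[
H(y \mid D_0, D_{n+1}) \;=\; H(y \mid D^n, D_0, D_{n+1}) \;+\; I(y ; D^n \mid D_0, D_{n+1})
\]
and handle the two terms separately. For the first term, conditioned on $D^n = d^n$ the surviving bits $X_i$ (those with $d_i = 0$) remain iid Bernoulli$(1/2)$, so $y$ is uniform on $\{0,1\}^{n - w(d^n)}$, where $w(\cdot)$ is Hamming weight; taking expectations gives $H(y \mid D^n, D_0, D_{n+1}) = \eE[n - w(D^n)] = n(1-d)$ \emph{exactly}, by the stationary marginal $\pP(D_i{=}1) = d$. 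For the mutual-information term, the crucial observation is that the conditional law of $y$ given $D^n$ depends on $D^n$ only through the scalar $L_y = n - w(D^n)$; hence $y$ and $D^n$ are conditionally independent given $L_y$, and
\[
I(y ; D^n \mid D_0, D_{n+1}) \;=\; I(y ; L_y \mid D_0, D_{n+1}) \;\leq\; H(L_y) \;\leq\; \log(n+1).
\]

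Combining the two estimates yields $(1/n) H(y \mid D_0, D_{n+1}) = (1-d) + O((\log n)/n) \to 1 - d$, so $R_n - J_n \to 0$, and Lemma~\ref{lem:Rit} then gives $\lim J_n = \lim R_n = R_{min}$. The whole argument rests on the single non-obvious observation that the potentially long random string $y(X^n,D^n) \in \{0,1\}^*$ carries information about $D^n$ only through its length $L_y$, which has entropy at most $\log(n+1)$; once this is recognized the rest is a one-line chain-rule manipulation, so I do not anticipate any real obstacle.
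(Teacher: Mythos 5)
Your proposal is correct and follows essentially the same route as the paper: the same chain-rule identity reducing the claim (via Lemma~\ref{lem:Rit}) to showing $(1/n)H(y(X^n,D^n)\mid D_0,D_{n+1})\to 1-d$, and the same key observation that given its length $L_y$ (entropy at most $\log(n+1)$) the deleted sequence is iid uniform. The only cosmetic difference is that you split $H(y\mid D_0,D_{n+1})$ as $H(y\mid D^n,D_0,D_{n+1})+I(y;D^n\mid D_0,D_{n+1})$ and then reduce the mutual information to $L_y$, whereas the paper splits off $H(L_y\mid D_0,D_{n+1})$ directly.
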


\begin{proof}
We have
\begin{eqnarray*}
R_{min}&=&\lim_{n\rightarrow \infty} \frac{1}{n}
H(X^n|y(X^n,D^n),D_0,D_{n+1})\\
&=&\lim_{n\rightarrow \infty}
\frac{1}{n}\big[H(X^n|D_0,D_{n+1})+H(y(X^n,D^n)|X^n,D_0,D_{n+1})\\
&&\ \ \ \ \ \ \ \ \ -H(y(X^n,D^n)|D_0,D_{n+1})\big]\\
&=& 1 + \lim_{n\rightarrow \infty}
\frac{1}{n}H(y(X^n,D^n)|X^n,D_0,D_{n+1})\\
&&-\lim_{n\rightarrow \infty}\frac{1}{n}( H(L_y|D_0,D_{n+1}) +
H(y(X^n,D^n)|L_y,D_0,D_{n+1})).
\end{eqnarray*}

Since
\[0\leq \lim_{n\rightarrow \infty}\frac{1}{n}H(L_y|D_0,D_{n+1}) \leq
\lim_{n\rightarrow \infty}\frac{1}{n}\log(n+1)=0,\] we have
$\lim_{n\rightarrow \infty}\frac{1}{n}H(L_y|D_0,D_{n+1})=0$. Since
given $L_y=l$ and given $(D_0,D_{n+1})$ the sequence $y(X^n,D^n)$ is
an iid Bernoulli$(1/2)$ sequence,
$H(y(X^n,D^n)|L_y=l,D_0,D_{n+t})=l$ holds. Therefore
$H(y(X^n,D^n)|L_y,D_0,D_{n+t})=\eE(L_y)$ and hence
\[\lim_{n\rightarrow \infty}\frac{1}{n}
H(y(X^n,D^n)|L_y,D_0,D_{n+1})=\lim_{n\rightarrow \infty}
\frac{1}{n}\eE[L_y]=\frac{\alpha}{\alpha+\beta}=1-d.\]

In conclusion,
\begin{eqnarray*}
R_{min}&=& 1 + \lim_{n\rightarrow \infty}
\frac{1}{n}H(y(X^n,D^n)|X^n,D_0,D_{n+1})-(1-d)\\
&=& \lim_{n\rightarrow \infty}\left(d +
\frac{1}{n}H(y(X^n,D^n|X^n,D_0,D_{n+1}))\right)\\
&=&\lim_{n\rightarrow \infty} J_n,
\end{eqnarray*}
which completes the proof of Lemma~\ref{lem:Jn}.
\end{proof}

Now let us use Lemma~\ref{lem:Jn} to prove
Lemma~\ref{lem:breakdown}.

Expanding $I(D_1;y(X^n,D^n)|X^n,D_0,D_{n+1})$ in two ways, we have
\begin{eqnarray}
\lefteqn{H(D_1|X^n,D_0,D_{n+1})-H(D_1|X^n,y(X^n,D^n),D_0,D_{n+1})}\nonumber\\
&=&
H(y(X^n,D^n)|X^n,D_0,D_{n+1})-H(y(X^n,D^n)|X^n,D_0,D_1,D_{n+1}).\nonumber\\
\label{eqn:fourterms}
\end{eqnarray}
The first term on the left side of (\ref{eqn:fourterms}) is equal to
$H(D_1|D_0,D_{n+1})$. The second term on the left side of
(\ref{eqn:fourterms}) is denoted by $E_n$. The first term on the
right side of (\ref{eqn:fourterms}) is equal to $n(J_n - d)$. The
second term on the right side of (\ref{eqn:fourterms}) is:
\begin{eqnarray*}
\lefteqn{H(y(X^n,D^n)|X^n,D_0,D_1,D_{n+1})}\\
&=&H(y(X^n,D^n)|X^n,D_1,D_{n+1})\\
&=&H(y(X^n,D^n)|X^n,D_1=1,D_{n+1})p_{D_1}(1)\\
&&+H(y(X^n,D^n)|X^n,D_1=0,D_{n+1})p_{D_1}(0)\\
&=&H(y(X_2^n,D_2^n)|X_1,X_2^n,D_1=1,D_{n+1})p_{D_1}(1)\\
&&+H(X_1,y(X_2^n,D_2^n)|X_1,X_2^n,D_1=0,D_{n+1})p_{D_1}(0)\\
&\stackrel{(e)}{=}&H(y(X_2^n,D_2^n)|X_2^n,D_1=1,D_{n+1})p_{D_1}(1)\\
&&+H(y(X_2^n,D_2^n)|X_2^n,D_1=0,D_{n+1})p_{D_1}(0)\\
&=&H(y(X_2^n,D_2^n)|X_2^n,D_1,D_{n+1})\\
&=&H(y(X^{n-1},D^{n-1})|X^{n-1},D_0,D_{n})\\
&=&(n-1)(J_{n-1}-d),
\end{eqnarray*}
where step (e) holds because $X_1$ is independent of
$(D^{n+1},X_2^n, y(X_2^n,D_2^n))$. Therefore (\ref{eqn:fourterms})
becomes
\begin{equation} H(D_1|D_0,D_{n+1}) -
E_n = n(J_n - J_{n-1}) +J_{n-1} -
d.\label{eqn:newfourterms}\end{equation}

Now let us take the limit as $n\rightarrow \infty$ on both sides of
(\ref{eqn:newfourterms}). Because of mixing of the Markov chain
$\{D_i\}_{i\geq 0}$, the distribution
$p_{D_{n+1}|D_0,D_1}(\cdot|d_0,d_1)$ converges to the stationary
distribution regardless of the initial values $(d_0,d_1)$ as $n$
goes to infinity. Therefore $\lim_{n\rightarrow \infty}
H(D_1|D_0,D_{n+1})=H(D_1|D_0)$. For the second term on the left side
of (\ref{eqn:newfourterms}), Lemma~\ref{lem:Enincreasing} guarantees
the convergence of $\{E_n\}_{n\geq 1}$.

\begin{lemma}\label{lem:Enincreasing}
(1) The sequence $\{E_n\}_{n\geq 1}$ is nondecreasing. (2)
$\lim_{n\rightarrow} E_n$ exists.
\end{lemma}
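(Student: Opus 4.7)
The plan is to prove part (1) by showing $E_{n-1} \leq E_n$ for all $n \geq 2$, and then to deduce part (2) from the monotonicity together with the trivial bound $E_n \leq H(D_1) \leq 1$. The key idea for the monotonicity is to bridge $E_{n-1}$ and $E_n$ via the auxiliary quantity
\[\tilde E_n := H(D_1 \mid D_0, X^n, y(X^n, D^n), D_n, D_{n+1}),\]
which differs from $E_n$ only by adjoining $D_n$ to the conditioning. Since additional conditioning cannot increase entropy, $\tilde E_n \leq E_n$, so it suffices to establish $\tilde E_n = E_{n-1}$.

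To show $\tilde E_n = E_{n-1}$, I would proceed in two substeps. First, once $D_n$ is given, the pair $(X^n, y(X^n, D^n))$ can be uniquely rewritten as $(X^{n-1}, X_n, y(X^{n-1}, D^{n-1}))$: the last bit of $y(X^n, D^n)$ equals $X_n$ when $D_n = 0$, while $y(X^n, D^n) = y(X^{n-1}, D^{n-1})$ when $D_n = 1$. Hence the conditioning set in $\tilde E_n$ is equivalent to $(D_0, X^{n-1}, X_n, y(X^{n-1}, D^{n-1}), D_n, D_{n+1})$. Second, I would verify that, given $D_n$, the pair $(X_n, D_{n+1})$ is conditionally independent of $(D_1, D_0, X^{n-1}, y(X^{n-1}, D^{n-1}))$. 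This follows from (i) $X_n$ being independent of $(X^{n-1}, D^{n+1})$ because $X^n$ is iid and independent of the deletion process, and (ii) $D_{n+1}$ being independent of $(D_0, \ldots, D_{n-1})$ given $D_n$ by the Markov property, combined with the independence of $\{D_i\}$ from $\{X_i\}$. The conditional independence lets me remove $X_n$ and $D_{n+1}$ from the conditioning without changing $H(D_1 \mid \cdot)$, yielding $\tilde E_n = H(D_1 \mid D_0, X^{n-1}, y(X^{n-1}, D^{n-1}), D_n) = E_{n-1}$.

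For part (2), each $E_n$ is the conditional entropy of the binary variable $D_1$, so $0 \leq E_n \leq H(D_1) \leq 1$; combined with the monotonicity from part (1), the bounded monotone convergence theorem guarantees that $\{E_n\}$ converges, establishing the existence of $E_\infty$. The main obstacle I expect is the careful bookkeeping in the first substep above: verifying precisely that conditioning on $D_n$ makes the splitting $(X^n, y(X^n, D^n)) \leftrightarrow (X^{n-1}, X_n, y(X^{n-1}, D^{n-1}))$ measurable in both directions, and checking that the invoked conditional independence is strong enough to eliminate $X_n$ and $D_{n+1}$ jointly, not merely one at a time. The remaining manipulations are routine applications of standard information-theoretic identities.
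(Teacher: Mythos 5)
Your proof is correct and follows essentially the same route as the paper: you adjoin $D_n$ to the conditioning of $E_n$ to obtain a lower bound (your $\tilde E_n$), then use the bijection between $(X^n, y(X^n,D^n))$ and $(X^{n-1}, X_n, y(X^{n-1},D^{n-1}))$ once $D_n$ is fixed, and finally strip off $X_n$ and $D_{n+1}$ via the stated conditional independences to land on $E_{n-1}$; the paper writes this as a chain of (in)equalities rather than isolating $\tilde E_n$, but the steps are identical. One small point in your favor: the paper's justification for part (2) contains a typo (it writes $E_n \geq 1$ where it must mean $E_n \leq 1$, i.e.\ $E_n \leq H(D_1) \leq 1$), and your argument states the correct upper bound.
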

\begin{proof}
(1) For all $n\geq 2$, we have
\begin{eqnarray*}
E_n &=& H(D_1|X^n,y(X^n,D^n),D_0,D_{n+1})\\
&\geq & H(D_1|X^n,y(X^n,D^n),D_0,D_n,D_{n+1})\\
&=& H(D_1|X^n,y(X^n,D^n),D_0,D_n)\\
&=& H(D_1|X^n,y(X^n,D^n),D_0,D_n=1)p_{D_n}(1)\\
&&+H(D_1|X^n,y(X^n,D^n),D_0,D_n=0)p_{D_n}(0)\\
&=& H(D_1|X^{n-1},X_n,y(X^{n-1},D^{n-1}),D_0,D_n=1)p_{D_n}(1)\\
&&+H(D_1|X^{n-1},X_n,y(X^{n-1},D^{n-1}),D_0,D_n=0)p_{D_n}(0)\\
&=& H(D_1|X^{n-1},y(X^{n-1},D^{n-1}),D_0,D_n)\\
&=& E_{n-1}.
\end{eqnarray*}
Therefore $\{E_n\}_{n\geq 1}$ is nondecreasing.

(2) Since for all $n$, $E_n \geq 1$ holds and $\{E_n\}_{n\geq 1}$ is
nondecreasing, $E_{\infty}=\lim_{n\rightarrow} E_n$ exists.
\end{proof}

By Lemma~\ref{lem:Enincreasing}, the left side of
(\ref{eqn:newfourterms}) converges to $H(D_1|D_0)-E_{\infty}$ as
$n\rightarrow \infty$. Since (\ref{eqn:newfourterms}) holds, the
right side also converges and the limit is $\left(\lim_{n\rightarrow
\infty} n(J_n - J_{n-1})\right) + R_{min}-d$. Since $\{J_n\}_{n\geq
1}$ is a converging sequence and the $\lim_{n\rightarrow \infty}
n(J_n - J_{n-1})$ exists, $\lim_{n\rightarrow \infty} n(J_n -
J_{n-1})=0$. Therefore in the limit as $n\rightarrow \infty$,
(\ref{eqn:newfourterms}) becomes
\[H(D_1|D_0)-E_{\infty} = R_{min} - d, \]
which completes the proof of Lemma~\ref{lem:breakdown}.

\section{Proof of Theorem~\ref{thm:smallbeta}}\label{app:proofsmallbeta}

 When $\alpha$ is a fixed constant and $\beta\ll 1$, it is easy
to verify that the first two terms of (\ref{eqn:breakdown}) are
\begin{eqnarray*}
d+H(D_1|D_0)&=& \frac{\beta}{\alpha+\beta} + \frac{\alpha
h_2(\beta)}{\alpha+\beta} + \frac{\beta
h_2(\alpha)}{\alpha+\beta}\\
&=&-\beta \log \beta + \beta \left(\frac{1+h_2(\alpha)}{\alpha} +
\log e\right)+ O(\beta^{2-\epsilon}),
\end{eqnarray*}
for any $\epsilon >0$. We will show that the third term of
(\ref{eqn:breakdown}) $E_\infty = C \beta + O(\beta^{2-\epsilon})$.

Let us first define ``typicality'' of the deletion pattern. Since
$E_\infty$ is the conditional entropy of $D_1$, which is more
relevant to the first a few bits of $D_0^n$, the typicality of the
$D_0^n$ concerns about only the first a few bits.
\begin{definition}\label{def:typicality}
Let $k = \max \{6, 6/(\log (1-\alpha))\}$. For $n > -k \log \beta$,
the deletion pattern $D_0^n$ is typical if the following two
conditions hold.
\begin{enumerate}
  \item There is at most one run of $1$'s in $(D_0,\ldots,D_{-k \log
  \beta})$.
  \item There are no more than $(-k/3 \log \beta)$ $1$'s in $(D_0,\ldots,D_{-k \log
  \beta})$.
\end{enumerate}
\end{definition}

Lemma~\ref{lem:typicality} states that the deletion pattern is
typical with high probability.
\begin{lemma}\label{lem:typicality}
For  any $\epsilon>0$, the probability that $D_0^n$ is typical is at
least $1-O(\beta^{2-\epsilon})$.
\end{lemma}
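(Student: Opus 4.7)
The plan is to union-bound over the two defining conditions of typicality, showing each fails with probability $O(\beta^{2-\epsilon})$. Let $W := \lceil -k\log\beta\rceil$, so $W = \Theta(|\log\beta|)$ since $k$ depends only on $\alpha$. By stationarity of the Markov chain, the probability that a run of $1$'s \emph{begins} at any single position $i\in\{0,\ldots,W\}$ is $\Theta(\beta)$: at $i=0$ this is $\pP(D_0=1)=d=\Theta(\beta)$, and at $i\ge 1$ it is $\pP(D_{i-1}=0,D_i=1)=(1-d)\beta=\Theta(\beta)$.

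For condition~(1), I would apply a pairwise union bound. If two distinct runs begin at positions $i<j$ in the window, then necessarily $D_i=1$, $D_{j-1}=0$, $D_j=1$ with $j\ge i+2$ (and additionally $D_{i-1}=0$ when $i\ge 1$). By the Markov property and stationarity this joint event has probability at most $d\cdot 1\cdot\beta = \Theta(\beta^2)$, since the intermediate conditional factor $\pP(D_{j-1}=0\mid D_i=1)$ is at most $1$. Summing over the $\binom{W+1}{2}=O((\log\beta)^2)$ pairs gives
\[
\pP(\text{condition (1) fails}) \;=\; O\bigl(\beta^2(\log\beta)^2\bigr) \;=\; O(\beta^{2-\epsilon})
\]
for any $\epsilon>0$, since any polylogarithmic factor in $1/\beta$ is absorbed by an arbitrarily small power of $\beta$.

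For condition~(2), I would condition on whether (1) holds. If (1) fails, the contribution is already $O(\beta^{2-\epsilon})$ by the above. If (1) holds, the window contains at most one run of $1$'s, and (2) fails only if that single run has length exceeding $W/3$. Run lengths under the Markov model are geometric with parameter $\alpha$, so the tail probability is $(1-\alpha)^{W/3}$. Writing $a:=-\log(1-\alpha)>0$, algebra gives $(1-\alpha)^{W/3}=\beta^{ak/3}$, and the choice $k\ge 6/a$ built into the definition of $k$ forces $ak/3\ge 2$, so the tail is at most $\beta^2$. Combining with the $O(\beta|\log\beta|)$ probability that \emph{any} burst occurs in the window at all yields a contribution of $O(\beta^{3-\epsilon})$, which is dominated by the bound from condition~(1).

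The main technical subtlety I anticipate is the boundary treatment: runs that began before position $0$ or extend past position $W$ must be handled. A burst that began earlier still contributes only one run to the window, so the pairwise bound for condition~(1) is unaffected; and a burst extending past position $W$ can only reduce the in-window count of $1$'s, so the geometric tail bound for condition~(2) remains valid. Adding the two failure contributions then gives $\pP(D_0^n\text{ not typical})=O(\beta^{2-\epsilon})$, as claimed.
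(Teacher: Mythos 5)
Your proof is correct and follows essentially the same route as the paper: a union bound over run locations in the length-$(-k\log\beta)$ window (with polylogarithmic factors absorbed into $\beta^{-\epsilon}$) for condition (1), and the geometric$(\alpha)$ tail bound $(1-\alpha)^{-k/3\log\beta}=O(\beta^2)$ for condition (2), using the intended reading $k\geq 6/(-\log(1-\alpha))$. The only cosmetic differences are that you use a pairwise union bound over run-start positions where the paper counts patterns with $r$ runs and sums over $r\geq 2$, and your extra factor of $O(\beta|\log\beta|)$ for condition (2) is a harmless refinement.
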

\begin{proof}
Since any deletion pattern that has $r$ runs of $1$'s in
$(D_0,\ldots,D_{-k \log \beta})$ occurs with probability
$O(\beta^r)$ and there are no more than $(-k \log
  \beta)^{2r}$ such patterns, $\pP((D_0,\ldots,D_{-k \log \beta})$
  contains $r$ runs of $1$'s$)=O(\beta^{r-\epsilon})$ for any
  $\epsilon>0$. Hence condition 1) of
  Definition~\ref{def:typicality} holds with probability
  $1-O(\beta^{2-\epsilon})$. Given that condition 1) holds,
condition 2) is violated if there is a burst of deletion longer than
$(-k/3 \log \beta)$, which occurs with the probability
$O((1-\alpha)^{-k/3 \log \beta})=O(\beta^2)$. In conclusion,
$\pP(D_0^n$ is typical $)=1-O(\beta^{2-\epsilon})$ for any
$\epsilon>0$.
\end{proof}

Let the indicator random variable $T:=1$ if $D_0^n$ is typical and
$T:=0$ otherwise. Lemma~\ref{lem:typicality} implies that
$p_T(0)=O(\beta^{2-\epsilon}), \forall \epsilon>0$.
Lemma~\ref{lem:focusontypical} states that we can focus on the
typical case $T=1$ in order to evaluate $E_\infty$ to the precision
of $O(\beta^{2-\epsilon})$.
\begin{lemma}\label{lem:focusontypical}
\[E_{\infty} = \lim_{n\rightarrow
\infty}
H(D_1|X^n,y(X^n,D^n),D_0,D_{n+1},T=1)p_T(1)+O(\beta^{2-\epsilon}).\]
\end{lemma}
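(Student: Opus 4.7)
The plan is to split $E_n = H(D_1\mid X^n,y(X^n,D^n),D_0,D_{n+1})$ according to the typicality indicator $T$ using the standard chain rule for entropy, treat the atypical branch as negligible via Lemma~\ref{lem:typicality}, and then pass to the limit $n\to\infty$. The key observation is that $T$ is a binary random variable with $p_T(0)=O(\beta^{2-\epsilon})$ for every $\epsilon>0$, so its entropy, and any contribution proportional to $p_T(0)$, is absorbed in the error term.

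Concretely, I would first write
\[
H(D_1\mid X^n,y(X^n,D^n),D_0,D_{n+1})
=H(D_1\mid X^n,y(X^n,D^n),D_0,D_{n+1},T)
+I(D_1;T\mid X^n,y(X^n,D^n),D_0,D_{n+1}).
\]
For the mutual-information term I would use the trivial bound $I(D_1;T\mid\cdot)\le H(T)\le h_2(p_T(0))$. Since $p_T(0)=O(\beta^{2-\epsilon})$, one has $h_2(p_T(0))=O\bigl(\beta^{2-\epsilon}\log(1/\beta)\bigr)$, which can be absorbed into $O(\beta^{2-\epsilon'})$ after relabeling $\epsilon'>\epsilon$ (any fixed positive exponent loss is allowed by the statement).

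For the conditional-entropy term I would expand over the two values of $T$:
\[
H(D_1\mid X^n,y(X^n,D^n),D_0,D_{n+1},T)
=p_T(0)\,H(D_1\mid\cdots,T=0)+p_T(1)\,H(D_1\mid\cdots,T=1).
\]
Because $D_1$ is binary, $H(D_1\mid\cdots,T=0)\le 1$, so the first summand is at most $p_T(0)=O(\beta^{2-\epsilon})$. Combining the two bounds yields the sandwich
\[
p_T(1)\,H(D_1\mid\cdots,T=1)\;\le\; E_n\;\le\; p_T(1)\,H(D_1\mid\cdots,T=1)+O(\beta^{2-\epsilon}),
\]
where the error is uniform in $n$ since it depends only on the marginal $p_T$, which in turn depends only on the first $\Theta(\log(1/\beta))$ coordinates of $D^n$.

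The last step is to take $n\to\infty$. Lemma~\ref{lem:Enincreasing} guarantees that the left-hand side converges to $E_\infty$, and the uniform error bound lets us push the limit inside to obtain the claimed identity. The main (minor) subtlety to check is precisely the uniformity of the $O(\beta^{2-\epsilon})$ bound in $n$: because $T$ is defined from a prefix of $D^n$ whose length $-k\log\beta$ does not depend on $n$, the distribution of $T$ and hence the error term are genuinely independent of $n$, so the limit on the right exists and equals $E_\infty$ up to $O(\beta^{2-\epsilon})$. No step involves any hard combinatorics; the only content is the careful accounting between $H(T)$, $p_T(0)$, and the trivial bound on $H(D_1)$.
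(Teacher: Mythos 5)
Your proof is correct and follows essentially the same route as the paper: a sandwich of $E_n$ obtained by conditioning on the typicality indicator $T$, bounding the atypical branch and $H(T)$ by $O(\beta^{2-\epsilon})$ via Lemma~\ref{lem:typicality}, and then letting $n\to\infty$. Your mutual-information decomposition is just a rephrasing of the paper's bound $E_n\le H(D_1,T|\cdot)=H(D_1|\cdot,T)+H(T|\cdot)$, and your explicit remarks on absorbing $h_2(p_T(0))$ and on uniformity in $n$ are fine refinements of the same argument.
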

\begin{proof}
For all $n > -k \log \beta$, we have the following lower bound of
$E_n$\begin{eqnarray*}
E_n &\geq& H(D_1|X^n,y(X^n,D^n),D_0,D_{n+1},T)\nonumber\\
&\geq& H(D_1|X^n,y(X^n,D^n),D_0,D_{n+1},T=1)p_T(1),
\end{eqnarray*}
and the following upper bound
\begin{eqnarray*}
E_n &\leq& H(D_1,T|X^n,y(X^n,D^n),D_0,D_{n+1})\nonumber\\
&=& H(D_1|X^n,y(X^n,D^n),D_0,D_{n+1},T)\nonumber\\
&&+ H(T|X^n,y(X^n,D^n),D_0,D_{n+1})\nonumber\\
&\leq& H(D_1|X^n,y(X^n,D^n),D_0,D_{n+1},T=1)p_T(1)\nonumber\\
&&+ H(D_1|X^n,y(X^n,D^n),D_0,D_{n+1},T=0)p_T(0)+H(T) \nonumber\\
&\leq& H(D_1|X^n,y(X^n,D^n),D_0,D_{n+1},T=1)p_T(1)\nonumber\\
&& +p_T(0)+H(T)\nonumber\\
&=&
H(D_1|X^n,y(X^n,D^n),D_0,D_{n+1},T=1)p_T(1)+O(\beta^{2-\epsilon}).
\end{eqnarray*}
Taking the limit as $n\rightarrow \infty$ completes the proof.
\end{proof}

For all $n > -k \log \beta$, we have
\begin{eqnarray*}
\lefteqn{H(D_1|X^n,y(X^n,D^n),D_0,D_{n+1},T=1)p_T(1)}\\
&=&H(D_1|X^n,y(X^n,D^n),D_0=1,D_{n+1},T=1)p_{D_0,T}(1,1)\\
&&+H(D_1|X^n,y(X^n,D^n),D_0=0,D_{n+1},T=1)p_{D_0,T}(0,1).
\end{eqnarray*}
We will separately analyze the following two cases: (1) $D_0=1, T=1$
and (2) $D_0=0,T=1$.

\begin{itemize}
\item Case (1): $D_0=1, T=1$. In this case we check
whether $X^{-k\log\beta} =  Y^{-k\log\beta}$. Let $M_1 := 1$ if they
match and $M_1:=0$ otherwise. Note that $M_1$ is determined by $X^n$
and $y(X^n,D^n)$.

\begin{itemize}
\item Case (1.1): $D_0=1, T=1, M_1=0$. There exists at least one $1$ in $D_1^{-k\log
\beta}$. Since $D_0=1$ and there is at most one run of $1$ in
$D_0^{-k\log \beta}$ in a typical deletion pattern, $D_1=1$ must
hold. Therefore $H(D_1|D_0=1,T=1,M_1=0)=0$.

\item Case (1.2): $D_0=1, T=1, M_1=1$.
In this case, both $D_1=0$ and $D_1=1$ are possible. Given $D_0=1,
T=1$, if $D_1=0$, then for all $i=2,\ldots,-k\log\beta$, $D_i=0$,
which implies that $X^{-k\log\beta} =  Y^{-k\log\beta}$. If $D_1=1$,
then for all $i=1,\ldots,-k\log \beta$, $X_i$ and $Y_i$ are
independently generated fair bits, hence the event $X_i=Y_i$ occurs
with probability $1/2$. Since events $\{X_i=Y_i\}_i$ are independent
across $i$, $\pP(M_1=1|D_1=1,D_0=1,T=1)=(1/2)^{k\log
\beta}=O(\beta^6)$. Since $\pP(D_1=1|D_0=1,T=1)=\Theta(1)$ and
$\pP(D_1=0|D_0=1,T=1)=\Theta(1)$, by Bayes' rule, we have
$\pP(D_1=1|D_0=1,T=1,M_1=1)=O(\beta^6)$. Therefore
$H(D_1|D_0=1,T=1,M_1=1)=O(\beta^{6-\epsilon}), \forall \epsilon>0$.
\end{itemize}

In conclusion, the contribution of Case (1) to $E_\infty$ is
\begin{eqnarray*}
\lefteqn{H(D_1|X^n,y(X^n,D^n),(D_0,T)=(1,1),D_{n+1})p_{D_0,T}(1,1)}\\
&=&H(D_1|X^n,y(X^n,D^n),(D_0,T)=(1,1),D_{n+1},M_1)\\
&&\times p_{D_0,T}(1,1)\\
&=&O(\beta^{6-\epsilon}).
\end{eqnarray*}

\item Case (2): $D_0=0, T=1$. In this case we will first check whether
$X^{-k/3\log\beta} =  Y^{-k/3\log\beta}$. Let $M_2 := 1$ if they
match and $M_2:=0$ otherwise.

\begin{itemize}
\item Case (2.1): $D_0=0, T=1, M_2=1$.
By the same argument as in Case~1 for $M_1=1$, we have
$\pP(D_1=1|D_0=0,T=1,M_2=1)=O(\beta^2)$, and
$H(D_1|D_0=0,T=1,M_2=1)p_{D_0,T,M_2}(0,1,1)=O(\beta^{2-\epsilon}),
\forall \epsilon>0$.

\item Case (2.2): $D_0=0, T=1, M_2=0$.
We try to find a length-$(-k/3 \log \beta)$ segment in $Y^{-k \log
\beta}$ that matches $X_{-2k/3 \log \beta +1}^{-k \log \beta}$.
Since (i) $M_2=0$ implies that at least one bit in the first
${-k/3\log\beta}$ bits is deleted and (ii) a burst of deletion in a
typical deletion pattern is no longer than ${-k/3\log\beta}$, there
must be no deletion in $D_{-2k/3 \log \beta+1}^{-k \log \beta}$,
which implies that there must be at least one segment in $Y^{-k \log
\beta}$ that matches $X_{-2k/3 \log \beta+1}^{-k \log \beta}$.
Define $B := 0$ if there are two or more segments that match
$X_{-2k/3 \log \beta}^{-k \log \beta}$; and for $b\in \zZ^+$, define
$B :=b$ if there is a unique segment $Y_{-2k/3 \log \beta+1 - b}^{-k
\log \beta -b}$ that matches $X_{-2k/3 \log \beta+1}^{-k \log
\beta}$ with an offset $b$.

\begin{itemize}
\item Case (2.2.1): $D_0=0, T=1, M_2=0, B=0$.
The condition $B=0$ requires at least $(-k/3 \log \beta)$
independent bit-wise matches, each of which occurs with probability
$(1/2)$. Hence $B=0$ occurs with probability at most $(1/2)^{-k/3
\log \beta}=O(\beta^2)$. Therefore the contribution of Case (2.2.1)
is
$H(D_1|D_0=0,T=1,M_2=0,B=0)p_{D_0,T,M_2,B}(0,1,0,0)=O(\beta^{2})$.

\item Case (2.2.2): $D_0=0, T=1, M_2=0, B=b \in \zZ^+$.
There must be a burst of deletion of length $b$ taking place in
$D_1^{-2k/3 \log \beta}$ which causes the offset of $b$ between
$X_{-2k/3 \log \beta +1}^{-k \log \beta}$ and the matching segment
in $y(X^n,D^n)$. Since the length of the burst is bounded by
$(-k/3\log \beta)$ in a typical deletion pattern, $b\leq (-k/3\log
\beta)$ must hold. Since we can find a correct correspondence
between a segment of $X^n$ to its outcome of deletion, the deletion
process to the left of the segment is conditionally independent to
the deletion process to the right. Therefore in order to evaluate
the conditional entropy of $D_1$ we need to focus on the process to
the left of the segment only.
%Under the condition of $B = b \in \zZ^+$, $D_1$ is conditionally
%independent of $X_{-2k/3 \log \beta} ^n$, $D_{n+1}$ and the part of
%$y(X^n,D^n)$ after the $(-2k/3 \log \beta - b)$-th bit.
Hence the contribution of this case to $E_\infty$ is: $\sum_{b}
H(D_1|X^n, y(X^n,
D^n),D_{n+1},T=1,D_0=0,M_2=0,B=b)p_{T,D_0,M_2,B}(1,0,0,b) =
\sum_{b}H(D_1|X^{n'}, y(X^{n'},
D^{n'}),T=1,D_0=0,M_2=0,B=b)p_{T,D_0,M_2,B}(1,0,0,b)$, where
$n':={-2k/3 \log \beta}$. Lemma~\ref{lem:cbeta} will show that the
contribution of Case (2.2.2) is $C \beta + O(\beta^{2-\epsilon})$.
This is the only case that is responsible for the leading term $C
\beta$ in $E_\infty$.
\end{itemize}
\end{itemize}
\end{itemize}

As a summary, the contribution of all the cases (1.1), (1.2), (2.1),
(2.2.1) to $E_\infty$ is of order $O(\beta^{2-\epsilon})$.
Lemma~\ref{lem:cbeta} will show that the contribution of Case
(2.2.2) is $C \beta + O(\beta^{2-\epsilon})$, which will complete
the proof of Theorem~\ref{thm:smallbeta}.

\begin{lemma}\label{lem:cbeta} For $n':={-2k/3 \log \beta}$, we
have $\sum_{b=1}^{-k/3\log\beta} H(D_1|X^{n'}, y(X^{n'},
D^{n'}),(T,D_0,M_2,B)=(1,0,0,b) ) \times p_{T,D_0,M_2,B}(1,0,0,b) =
C \beta +O(\beta^{2-\epsilon})$.
\end{lemma}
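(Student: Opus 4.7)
My plan is to exploit the $b$-run structure introduced before the lemma. Conditional on the event $\{T=1, D_0=0, M_2=0, B=b\}$, the deletion pattern $D^{n'}$ contains exactly one burst of $1$'s, of length exactly $b$, confined within $D_1^{n'}$. By Fact~\ref{fact:deletionbrun} and Definition~\ref{def:firstbrun}, the set of starting positions $i$ of this burst that are consistent with the pair $(X^{n'},y(X^{n'},D^{n'}))$ is exactly $\{1,2,\ldots,L\}$, where $L$ denotes the extent of the first $b$-run of $X^{n'}$. Since $X^{n'}$ is iid Bernoulli$(1/2)$, each bit-wise extension of the first $b$-run succeeds independently with probability $1/2$, so $\pP(L=l)=2^{-l}$ for $1\leq l\leq n'-b+1$, with the truncation tail $\pP(L>-2k/3\log\beta)$ bounded by $O(\beta^{2k/3})$ and thus negligible.

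Given $L=l$ and the identified burst length $B=b$, we have $D_1=1$ iff $i=1$. Using the stationarity of the Markov chain $\{D_j\}$ and the fact that every position $i\in\{1,\ldots,l\}$ contributes the same factor $(1-\beta)^{l-1}\beta(1-\alpha)^{b-1}\alpha$ (the no-deletion stretches on either side of the burst having combined length $l-1$ regardless of $i$), I will show that the conditional law of $i$ given $(X^{n'},y(X^{n'},D^{n'}),L=l,B=b)$ is uniform on $\{1,\ldots,l\}$ up to a multiplicative factor of $1+O(\beta)$. This yields $H(D_1\mid L=l,B=b,X^{n'},y(X^{n'},D^{n'}))=h_2(1/l)+O(\beta^{1-\epsilon})$, while the joint probability $\pP(L=l,B=b,T=1,D_0=0,M_2=0)$ equals $l\cdot 2^{-l}\cdot\alpha(1-\alpha)^{b-1}\beta\cdot(1+O(\beta^{1-\epsilon}))$.

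Assembling these ingredients and summing, the sum in the lemma reduces to
\begin{equation*}
\beta\cdot\left(\sum_{b=1}^{\infty}\alpha(1-\alpha)^{b-1}\right)\cdot\left(\sum_{l=1}^{\infty} 2^{-l}\,l\,h_2(1/l)\right)+O(\beta^{2-\epsilon})=\beta\sum_{l=1}^{\infty} 2^{-l}\,l\,h_2(1/l)+O(\beta^{2-\epsilon}).
\end{equation*}
It remains to identify the inner sum with $C$. Using the expansion $l\,h_2(1/l)=\log l+(l-1)\log(l/(l-1))$, reindexing $m=l-1$ in the second piece, and telescoping, one verifies $\sum_{l=1}^{\infty} 2^{-l}\,l\,h_2(1/l)=\sum_{l=1}^{\infty} 2^{-l-1}\,l\,\log l=C$, completing the computation.

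The main obstacle is the rigorous bookkeeping of the $O(\beta^{2-\epsilon})$ errors: (i) ensuring that the per-position uniformity holds uniformly in $l$, including at the boundary $i=1$ (which interacts with the conditioning $D_0=0$) and $i=l$ (which interacts with the right end of the first $b$-run); (ii) verifying that conditioning on the uniqueness event $B=b$ does not distort the joint distribution of $(L,i)$ by more than $O(\beta)$; and (iii) ensuring that the per-$b$ error is controlled as $O(\beta^{2-\epsilon})\cdot(1-\alpha)^{b-1}\alpha$, so that summing over $b\in\{1,\ldots,-k/3\log\beta\}$ yields an overall error of $O(\beta^{2-\epsilon})$ without incurring an extra $\log(1/\beta)$ factor.
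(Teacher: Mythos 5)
Your proposal takes essentially the same route as the paper's proof: it exploits the $b$-run structure (Fact~\ref{fact:deletionbrun}) to identify the $l=l_b(X^{n'})$ equally likely deletion patterns, giving conditional entropy $h_2(1/l)$, computes the joint probability of the conditioning event as $l\,2^{-l}\,\alpha(1-\alpha)^{b-1}\beta\,(1+O(\beta^{1-\epsilon}))$, and sums over $b$ and $l$ to get $C\beta+O(\beta^{2-\epsilon})$. Your telescoping verification that $\sum_{l}2^{-l}\,l\,h_2(1/l)=\sum_{l}2^{-l-1}\,l\log l$ is additional detail that the paper asserts without proof in step~(i).
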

\begin{proof}
%Let $\mathbf{d}_{i,b}$ denote the sequence of $(i-1)$ $0$'s followed
%by $b$ $1$'s followed by $(n'-b-i+1)$ $0$'s. In other words,
%$\mathbf{d}_{i,b}:=d_1^{n'}\in \{0,1\}^{n'}$ satisfying that if
%$j=i,\ldots,i+b-1$, then $d_j=1$, otherwise $d_j=0$. Under the
%condition $\{T=1,D_0=0,M_2=0,B=b\}$, $D_1^{n'}$ can only take values
%in $\{\mathbf{d}_{i,b} \}_{i=1}^{-k/3\log \beta}$.
Using the abbreviation $Y:=y(X^{n'},D^{n'})$, we have
\begin{eqnarray}
\lefteqn{\sum_{b=1}^{-k/3\log\beta} H(D_1|X^{n'}, Y,
(T,D_0,M_2,B)=(1,0,0,b))}\nonumber\\&&\times p_{T,D_0,M_2,B}(1,0,0,b)\nonumber\\
&=& \! \! \! \!\sum_{b=1}^{-k/3\log\beta}  \! \!\sum_{x^{n'},y}
H(D_1|X^{n'}=x^{n'}, Y=y,\nonumber\\
&&\hspace{0.5in}(T,D_0,M_2,B)=(1,0,0,b))\nonumber\\
&&\hspace{0.5in}\times
p_{X^{n'},Y,T,D_0,M_2,B}(x^{n'},y,1,0,0,b)\label{eqn:entropyexpansion}\\
&\stackrel{(f)}{=}& \! \! \! \!\sum_{b=1}^{-k/3\log\beta}  \!
\!\sum_{x^{n'}}
H(D_1|X^{n'}=x^{n'}, Y=x_{b+1}^{n'},\nonumber\\
&&\hspace{0.5in}(T,D_0,M_2,B)=(1,0,0,b))\nonumber\\
&&\hspace{0.5in}\times
p_{X^{n'},Y,T,D_0,M_2,B}(x^{n'},x_{b+1}^{n'},1,0,0,b)\nonumber\\
&\stackrel{(g)}{=}& \! \! \! \!\sum_{b=1}^{-k/3\log\beta}  \!
\!\sum_{x^{n'}}
H(D_1|X^{n'}=x^{n'}, Y=x_{b+1}^{n'},(T,D_0,B)=(1,0,b))\nonumber\\
&&\hspace{0.5in}\times
p_{X^{n'},Y,T,D_0,B}(x^{n'},x_{b+1}^{n'},1,0,b)+O(\beta^2),\label{eqn:entropyexpansion2}
\end{eqnarray}
where step (f) holds because of the following reason. Given
$(T,D_0,M_2,B)=(1,0,0,b)$, if $D_1=1$, then $D_1^{b}=\mathbf 1$ and
$D_{b+1}^{n'}=\mathbf 0$ hold, which imply that $Y = X_{b+1}^{n'}$.
Therefore the conditional entropy in (\ref{eqn:entropyexpansion}) is
nonzero only if $y = x_{b+1}^{n'}$. Step (g) holds because given $y
= x_{b+1}^{n'}$, the probability that $M_2=0$ is of order
$O(\beta^{2})$.

Define $l_{b}(\cdot): \{0,1\}^*\rightarrow \zZ^+$ to be the length
of the first $b$-run of $x^n$ (c.f. Definitions~\ref{def:brun} and
\ref{def:firstbrun}). In other words, for $l=1,2,\ldots$,
$l_{b}(x^{n}):=l$ if (i) $\forall b\leq i < b+l$, $x_i = x_{i-b}$
and (ii) $x_{b+l}\neq x_l$. Let $\mathbf{d}_{i,b}$ denote the
sequence $d_1^{n'}\in \{0,1\}^{n'}$ satisfying that if
$j=i,\ldots,i+b-1$, then $d_j=1$, otherwise $d_j=0$. Due to
Fact~\ref{fact:deletionbrun}, if $l_b(x^{n'}) = l$, then $y(x^{n'},
\mathbf{d}_{i,b}) = x_{b+1}^{n'}$ holds for all $i=1,\ldots,l$, but
does not hold for any $i>l$. Since given $D_0=0$ and $D_{n+1}=0$ all
$l$ deletion patterns $\{\mathbf{d}_{i,b} \}_{i=1}^l$ occurs with
the same probability $\alpha(1-\alpha)^{b-1}\beta (1-\beta)^{n'-b}$,
and only one of them, $\mathbf{d}_{1,b}$, satisfies $D_1=1$, we have
$H(D_1|X^{n'}=x^{n'},
Y=x_{b+1}^{n'},l(X^{n'})=l,(T,D_0,B)=(1,0,b))=h_2(1/l)$.

For a sequence $x^{n'}$ satisfying $l_b(x^{n'})=l$, we have
\begin{eqnarray*}
\lefteqn{p_{X^{n'},Y,T,D_0,B}(x^{n'},x_{b+1}^{n'},1,0,b)}\\
&=& p_{X^{n'}}(x^{n'}) \sum_{i=1}^l
p_{D_1^{n'}|D_0,D_{n'+1}}(\mathbf{d}_{i,b}|0,0)\\
&=& p_{X^{n'}}(x^{n'}) l
\alpha(1-\alpha)^{b-1}\beta (1-\beta)^{n'-b}\\
&=& p_{X^{n'}}(x^{n'}) l \alpha(1-\alpha)^{b-1}\beta
(1-O(\beta^{1-\epsilon})),
\end{eqnarray*}
for any $\epsilon>0$.

Therefore we continue (\ref{eqn:entropyexpansion2}) as
\begin{eqnarray*}
\lefteqn{\mbox{(\ref{eqn:entropyexpansion2})}}\\ &=&
\sum_{b=1}^{-k/3\log\beta} \sum_{l=1}^{n'-b}
\sum_{x^{n'}:l_b(x^{n'})=l} h_2\left(\frac{1}{l}\right)
p_{X^{n'}}(x^{n'}) l
\alpha(1-\alpha)^{b-1}\beta(1-O(\beta^{1-\epsilon}))\\
&&+O(\beta^2)\\
&=& \sum_{b=1}^{-k/3\log\beta} \sum_{l=1}^{n'-b}
h_2\left(\frac{1}{l}\right)2^{-l} l
\alpha(1-\alpha)^{b-1}\beta(1-O(\beta^{1-\epsilon}))+O(\beta^2)\\
&\stackrel{(h)}{=}& \sum_{b=1}^{\infty} \sum_{l=1}^{\infty}
h_2\left(\frac{1}{l}\right)2^{-l} l
\alpha(1-\alpha)^{b-1}\beta(1-O(\beta^{1-\epsilon}))+O(\beta^2)\\
&=& \sum_{l=1}^{\infty} h_2\left(\frac{1}{l}\right)2^{-l} l\beta+O(\beta^{2-\epsilon})\\
&\stackrel{(i)}{=}& C \beta+O(\beta^{2-\epsilon}),
\end{eqnarray*}
where step (h) holds because $k = \max \{6, 6/(\log (1-\alpha))\}$
and $n'= -2k/3 \log \beta$, which guarantee that changing the limits
of summations to infinity only leads to a change of order
$O(\beta^2)$, and step (i) holds because $\sum_{l=1}^{\infty}
h_2(1/l)2^{-l}l = \sum_{l=1}^{\infty} 2^{-l-1}l\log l$.
\end{proof}
%To create space, if needed, we can omit references [6] and [13]-[16]
%and change the wording of paragraph 4 in the Introduction. Another
%option is to keep say [16] and if [16] cites [13], [14], [15] we can
%say "see [16] and references therein".

%%%%%%%%%%%%%%%%%%%%%%%%%%%%%%%%%%%%%%%%%%
\footnotesize

%\addcontentsline{toc}{chapter}{References}
\bibliography{newbibfile}
%%%%%%%%%%%%%%%%%%%%%%%%%%%%%%%%%%%%%%%%%%
\end{document}